
\documentclass{SCIS2023}

\begin{document}
\ArticleType{RESEARCH PAPER}
\Year{2022}
\Month{}
\Vol{}
\No{}
\DOI{}
\ArtNo{}
\ReceiveDate{}
\ReviseDate{}
\AcceptDate{}
\OnlineDate{}

\title{Constructions of Optimal Binary Locally Repairable Codes via Intersection Subspaces}{Constructions of {\color{blue}Optimal Binary} Locally Repairable Codes via Intersection Subspaces}



\author[1]{Wenqin Zhang}{}
\author[1]{Deng Tang}{dengtang@sjtu.edu.cn}
\author[1]{Chenhao Ying}{}
\author[1]{Yuan Luo}{luoyuan@cs.sjtu.edu.cn}

\AuthorMark{Wenqin Zhang}

\AuthorCitation{W.Zhang, D.Tang, C.Ying, et al }


\address[1]{ School of Electronic Information and Electrical Engineering, Shanghai Jiao Tong University, Shanghai, {\rm200240}, China.}


\abstract{Locally repairable codes (LRCs), which can recover any symbol of a codeword by reading only a small number of other symbols, have been widely used in real-world distributed storage systems, such as Microsoft Azure Storage and Ceph Storage Cluster. Since binary linear LRCs can significantly reduce coding and decoding complexity, constructions of binary LRCs are of particular interest.

The aim of this paper is to construct dimensional optimal binary locally repairable codes with disjoint local repair groups. We introduce how to connect intersection subspaces with binary locally repairable codes and construct dimensional optimal binary linear LRCs with locality $2^b$ ($b\geq 3$) and  minimum distance $d\geq 6$ by employing intersection subspaces deduced from the direct sum. This method will sufficiently increase the number of possible repair groups of dimensional optimal LRCs, and thus efficiently  expanding the range of the construction parameters while keeping the largest code rates compared with all known binary linear LRCs with minimum distance $d\geq 6$ and locality $2^b$.}

\keywords{Locally repairable codes, disjoint local repair groups,  distributed storage systems, intersection subspaces, direct sum}

\maketitle

\section{Introduction}\label{introduction}
Efficient distributed storage systems (DSSs) provide access to data by storing it in a distributed manner across several storage nodes. Data loss and unavailability could happen in a DSS due to the unreliability of individual nodes. A classical technique used in the storage system is replication schemes. In such a scheme, copies of data packets are stored across different nodes. This scheme provides high reliability and availability. A disadvantage of this scheme is that replication has very high storage overhead. In the case of accelerated and relentless data growth, a new technique is necessary.

Erasure coding has been widely used in distributed storage systems, such as Windows Azure Storage \cite{Cheng2012} and Facebook Analytics Hadoop cluster \cite{Sathiamoorthy2013}, because of its higher fault-tolerance values and lower storage overheads. The failure node can be repaired by calculating the redundancy out of the original data over erasure channel. For an erasure code with length $n$, dimension $k$ and minimum distance $d$, any $d-1$ failures can be repaired by contacting at least $k$ other nodes.  Among these, the traditional maximum distance separable (MDS) erasure codes are optimal in terms of storage overhead. However, in the case of a single-node failure, the traditional MDS codes require connecting a large subset of surviving nodes, which will lead to an increase in the complexity of network traffic and the amount of input/output (I/O) operations. Consequently, regenerating codes \cite{Dimakis2010} and  codes with locality (known more commonly as locally repairable codes) \cite{Gopalan2012} were introduced in such a scenario. Regenerating codes can  efficiently repair a failure node by  minimizing the number of transmitted symbols. There are some works of regenerating codes in \cite{Rashmi2011,Hou2019,Zhou2022Explicit}.  Nevertheless, the number of nodes contacted for repair can be a bottleneck for the system efficiency. Hence, locally repairable codes (LRCs) were introduced to optimize the number of disk reads required to repair a single-node failure. This paper is devoted to the construction of locally repairable codes with disjoint repair groups and good parameters.
\subsection{Code with locality and known results}
Let $q$ be a power of an arbitrary prime and $\mathbb{F}_q$ be the finite field with $q$ elements.
Let $\mathcal{C}$ be an $[n, k, d]_q$ linear code with length $n$, dimension
$k$ and minimum distance $d$ over $\mathbb{F}_q$.
The code $\mathcal{C}$ is called an LRC with locality $r$ if each code symbol  $c_i$ in a codeword $\mathbf{c}\in \mathcal{C}$ can be recovered by downloading at most $r$ other symbols. We denote such a code $\mathcal{C}$ as an $r$-LRC. When $q=2$, we omit $q$ from the notation $[n, k, d]_q$. In addition, the set of such $r$ symbols that can repair the $i$-th symbol is called a ``repair set".

LRCs are well studied and many works have been done (see, for example, in
\cite{Matthias2017,Gopalan2012,Viveck2015,guruswami2019,Anyu_2019,Bin2021}) to explore the relationship between parameters $n,k,d,r$. For an $[n,k,d]_q$ LRC with locality $r$, Gopalan et al. \cite{Gopalan2012} proved the well-known Singleton-like bound as:
\begin{eqnarray}\label{eq:sin}
  d \leq n-k-\left\lceil \frac{k}{r}\right\rceil+2,
\end{eqnarray}
where $\lceil\cdot\rceil$ stands for the ceiling function. An LRC is said to be $d$-optimal if it satisfies~\eqref{eq:sin} with equality for given $n,k,r$.  When $r=k$, the bound~\eqref{eq:sin} specializes to the classical Singleton bound $d\leq n-k+1$. Over the past few years, many constructions of optimal LRCs which achieve bound~\eqref{eq:sin} have been presented. In \cite{Tamo2014}, Tamo et al. proposed optimal LRCs over a finite field of size $q\geq n+1$ via subcodes of Reed-Solomon codes.
In~\cite{Jie2016}, Hao et al. designed optimal LRCs with $d = 3, 4$ over a finite field of size $q\geq r+2$.  By automorphism groups of elliptic curves, \cite{Liming2020} constructed optimal locally repairable codes with length up to $q+\sqrt{q}$. In addition, the constructions of optimal LRCs based on Reed-Solomon codes, among other techniques, have been recently discovered in \cite{tamo2016,Lingfei2019,Zhang2020}.
Notice that bound~\eqref{eq:sin} has been proved to be tight for some special cases
with large alphabet size according to the construction provided in~\cite{Gopalan2012}.
When it comes to small fields, parameters of the optimal constructions become very restrictive~\cite{Zhang2017},~\cite{hao2017optimal}.

In practice, codes over small alphabets attract more attention particularly in the application of storage because of their ease for implementation. In 2013, Cadambe and Mazumdar derived a new bound for $[n,k,d]_q$ LRCs which took the size of the alphabet into account~\cite{Viveck_2013}. This bound is known as C-M bound. They showed that the dimension $k$ of an $[n, k, d]_q$ LRC with locality $r$ is upper bounded by
\begin{equation}\label{eq:2}
  k \leq \min _{t \in \mathbb{Z}^{+}}\left\{t r+k_{opt}^{(q)}(n-t(r+1), d)\right\},
\end{equation}
where $k_{opt}^{(q)}(n, d)$ is the largest possible dimension of a code with length $n$ for a given alphabet size $q$ and a given minimum distance $d$. This bound applies to both linear and nonlinear codes.  Later in~\cite{Viveck_2013,Natalia2015}, explicit constructions of the family of binary LRCs are proposed which achieve the bound~\eqref{eq:2}. However, because the exact value of $k^{(q)}_{opt}(n,d)$ can only be obtained in a limited case with relatively short code length, it is difficult to apply the C-M bound to evaluate the optimality of general LRCs.

The original LRCs only support repair of a single-node failure. To address the problem of multiple nodes failures in practical scenarios, the concept of original LRCs was further generalized to LRCs with $(r,\delta)$-locality by Prakash et al. \cite{prakash2012optimal}. When $\delta=2$, an LRC with $(r,2)$-locality is reduced to an LRC with locality $r$. In 2019,  Grezet M. et al. \cite{grezet2019alphabet} used consecutive residual codes and Griesmer bound $\mathcal{G}(k,d)=\sum_{i=0}^{k-1}\left\lceil\frac{d}{q^{i}}\right\rceil\leq n$
 to derive a new alphabet-dependent bound for an $(r,\delta)$-LRC with parameters $[n,k,d]$:
 \begin{equation}\label{eq:gre}
     k\leq \min_{\ell\in \mathbb{Z}^{+}}\{\ell+k^{(q)}_{opt}(n-(a+1)\mathcal{G}(\kappa,\delta)+\mathcal{G}(\kappa-b,\delta),d)\},
 \end{equation}
 where $\kappa$ is the upper bound on dimension of local codes and $a, b\in \mathbb{Z}$ satisfy $\ell=a\kappa+b,0\leq b\leq \kappa$. Notably, the bound~\eqref{eq:gre} is tighter than the bound C-M bound when $\kappa<r$ or $\delta>q$. Consequently, one tends to use the bound~\eqref{eq:gre} for binary $(r, \delta)$-LRCs with $\delta>2$. In the past decade, many results have been obtained for $(r,\delta)$-LRCs \cite{Song2014,Han2020,xing2022,Luo2022,cai2022optimal}. In addition, a lot of progress on the
study of derivatives of  LRC has been made, such as LRCs with locality and availability \cite{wang2014,Tan2020,Jin2022},  Maximally  recoverable LRCs~\cite{Cheng2012,Dhar2023}, Scalable local reconstruction code~\cite{Zhang2022scala}.


Recently, Wang et al.~\cite{Anyu_2019} presented a sphere-packing
bound for binary LRCs based on disjoint local repair groups, which serves as a generalization
of the bounds in \cite{ Gopalan2012,Alexander2015}. Consider binary linear LRCs with minimum distance $d\geq5$: the dimension $k$ is actually upper bounded by the largest integer no greater than the following  explicit bound~\cite{Anyu_2019} given in equation~\eqref{eq:3}. For any $[n, k, d]$ binary linear LRCs with locality $r$ such that $d \geq 5$ and $2 \leq r \leq \frac{n}{2}-2$, it follows that
\begin{equation}\label{eq:3}
k \leq \frac{r n}{r+1}-\min \left\{\log _{2}\left(1+\frac{r n}{2}\right), \frac{r n}{(r+1)(r+2)}\right\}.
\end{equation}
We say that a binary linear LRC is $k$-optimal if it satisfies the bound~\eqref{eq:3} with equality for given $n$, $d$ and $r$. This paper will focus on a general assumption $n \geq 5(r+1)(r+2)$ that will be satisfied in the main results, and thus the bound~\eqref{eq:3} can be further simplified to be
\begin{equation*}
k \leq \frac{r n}{r+1}-\log _{2}\left(1+\frac{r n}{2}\right) .
\end{equation*}


 Compared to $q$-ary LRCs, binary LRCs are known to be advantageous in terms of implementation complexity in practical systems. For optimal binary linear LRCs, various construction methods  have been proposed, especially for the case of minimum distance $3$ and $4$.
Nevertheless, constructing optimal binary LRCs becomes increasingly challenging as the minimum distance requirement grows. In 2017, by the partial spread, Nam et al. constructed a class of binary linear LRCs with minimum distance at least $6$ and showed some examples that are optimal with respect to the bound~\eqref{eq:2}~\cite{Nam2017}. Subsequently,
 Wang et al. in~\cite{Anyu_2019} constructed an $\left[n=\frac{2^{s}-1}{2^{t}-1}, k=\frac{r n}{r+1}-s, d \geq 6\right]$ binary $k$-optimal LRC with locality $r = 2^b$ from generalized Hamming codes, where $s$ and $t$ are integers that satisfy $2t|s$ and $\frac{s}{2t}\geq 2$. In~\cite{Jingxue2019}, Ma et al. proposed a class of $k$-optimal binary linear LRCs for $d = 6$, which included the codes given in~\cite{Anyu_2019}, and they also presented a new $k$-optimal construction for locality $3$ and minimum distance $6$ from a partial $t$-spread.  For any fixed locality $r$ and minimum distance $d$, the coding rate of optimal LRCs becomes larger as the code length becomes larger~\cite{Dimakis2010}. Note that most constructions of binary linear LRCs are based on the partial $t$-spread, namely a set of mutually disjoint $t$-dimensional subspaces. Owing to the mutually  disjoint subspaces, it is easy to calculate the minimum distance of binary linear LRCs. However, a disadvantage of this approach is that the code length is limited. Actually, for a given locality, based on the intersection subspace, $k$-optimal binary linear LRCs can be constructed with code length larger than previously known, but few constructions exist. A more ingenious approach is necessary to cope with the intersection subspace in order to guarantee the minimum distance. Hence, it is a challenging and interesting problem to construct $k$-optimal binary linear LRCs by applying intersection subspaces.

 \subsection{Our results}
This paper focuses on a single-node failure problem of LRCs. We consider binary linear LRCs of which all local repair groups have uniform size $r + 1$ and are pairwise disjoint, i.e., $(r + 1)|n$.
 Using parity check matrices, we present an explicit construction of binary linear LRCs based on intersection subspaces with minimum distance $d\geq6$ and locality $r=2^b$. These intersection subspaces are designed by the direct sum of subspaces. Our LRCs turned out to be $k$-optimal in terms of the bound~\eqref{eq:3}. Precisely speaking, the following results are obtained.

 An explicit construction of $k$-optimal binary linear LRCs with new parameters $[n=(r+1)\ell, k\geq n-s-\ell-m, d\geq6]$ is proposed (see Construction~\ref{cons-lrc} and Theorem~\ref{Thm:paras} below), where $\ell= \frac{2^m-1}{2^{2b-s}-1}$, $b\geq3$ , $0\leq s< b$ and $(2b-s)|m$. When $\ell$ belongs to a determined range, those binary linear LRCs all can attain the bound~\eqref{eq:3}, so they are $k$-optimal (Theorem~\ref{tm:op-l}). In the case of $(2b-s)\nmid m$, we construct $k$-optimal binary linear LRCs with parameters $[n=(r+1)\ell, k=n-\ell-s-m,d\geq6]$  where $\ell=\frac{2^{m-s}-2^{(2b-s)}(2^z-1)-1}{2^{(2b-s)}-1}$, $0\leq s< b$ and $z\equiv (m-s)\bmod (2b-s)< b$ (Theorem~\ref{th3:parity}). Similar to Theorem~\ref{tm:op-l}, a class of $k$-optimal LRCs with a wider code length can be obtained from Theorem~\ref{th3:parity} (see details in Theorem~\ref{th4:l}). All results of $k$-optimal binary linear LRCs in this paper are summarized in Table~\ref{tab:con}.
 \begin{table*}[ht]
  \small{\centering
  \caption{ $k$-optimal binary linear LRCs with $d\geq6$ and $r=2^b$}\label{tab:con}
 \begin{tabular}{|p{2cm}|p{3cm}|p{4.3cm}|p{4.4cm}|}
   \hline
    & $ n,k$ & $\ell$ & $b\geq3,m\geq4b,s$ \\
  \hline \multirow{2}{*}{Theorem~\ref{Thm:paras}}& $n=(r+1)\ell$, & \multirow{2}{*}{$\ell=\frac{2^m-1}{2^{2b-s}-1}$} & $0\leq s< b$, $(2b-s)|m$   \\
  &$k= n-s-\ell-m$&&\\
  \hline \multirow{2}{*}{Theorem~\ref{tm:op-l}} & $n=(r+1)\ell$,  &
    \multirow{2}{*}{$\frac{2^{m+s-1}-1}{2^{b-1}(2^{b}+1)}<\ell \leq \frac{2^m-1}{2^{2b-s}-1}$} & $0\leq s< b$, $(2b-s)|m$  \\
    &$k= n-s-\ell-m$&&\\
    \hline \multirow{2}{*}{Theorem~\ref{th3:parity}} &  $n=(r+1)\ell$, &  \multirow{2}{*}{$\ell=\frac{2^{m}-2^{(2b-s)}(2^z-1)-1}{2^{(2b-s)}-1}$}& $0\leq s< b$, $(2b-s)\nmid m$,\\
   &$k= n-\ell-m-s$&&$z\equiv m\bmod (2b-s)\leq b$ \\
       \hline \multirow{3}{*}{Theorem~\ref{th4:l}} &  $n=(r+1)\ell$, &  $\frac{2^{m+s-1}-1}{2^{b-1}(2^{b}+1)}<\ell\leq$& $0\leq s< b$, $(2b-s)\nmid m$,\\
   &$k= n-\ell-m-s$&$\frac{2^{m}-2^{(2b-s)}(2^z-1)-1}{2^{(2b-s)}-1}$&$z\equiv m\bmod (2b-s)\leq b$ \\ \hline
 \end{tabular}}
 \end{table*}
 Moreover, we compare our results with the state-of-the-art approaches for a fixed locality $r$. 
 The results show that the $k$-optimal LRCs in this work have more flexible parameters  $[n,k]$ than those in \cite{Anyu_2019,Jingxue2019}. In other words, our construction can generate more repair groups, so with the same locality, the code length of $k$-optimal LRCs is larger. Additionally, by calculating code rates, it can be obtained that the code rate $R \triangleq k/n$ in our construction is higher than that in \cite{Anyu_2019}. At the end of this paper, a shortening technique will yield the derivation of new binary linear LRCs. By deleting codewords in $k$-optimal binary linear LRCs with nonzero values in the last coordinates and then removing the last coordinates from the remaining
codewords, we can suggest new parameters from the original binary linear LRCs (Theorem~\ref{the:shorten}).
\subsection{Organization}

In Section \ref{preliminarie}, some basic definitions and
results on LRCs, partial $t$-spread and intersection subspace are introduced. In Section \ref{se3}, we present a definition of a desired matrix and an explicit construction of LRCs. Based on this construction, we obtain the main results in Theorem~\ref{Thm:paras}, Theorem~\ref{tm:op-l}, Theorem~\ref{th3:parity} and Theorem~\ref{th4:l}. We also give three examples to explain the corresponding construction and some tables to show the comparison. In Section \ref{sec:5}, Theorem~\ref{the:shorten} proposes the result to shorten binary linear LRCs. Finally, Section \ref{sec:6} concludes the paper. In addition, the constructions of some desired matrices are displayed in~\ref{sec:8}.

\section{Preliminaries}\label{preliminarie}
In this section, we introduce some notations and basic results required later in this paper.

\begin{itemize}
\item Let $V_{n}(q)$ be the vector space with dimension $n$ over $\mathbb{F}_q$. When $q = 2$, we omit $q$ from the notation $V_{n}(q)$.
\item Suppose that $n$ is a  positive integer, we write $[n]=\{1,\cdots,n\}$.
\item For any $\mathbf x=(x_1, x_2,\cdots , x_n)$ and $\mathbf y=(y_1, y_2,\cdots , y_n)$ in $V_n(q)$,  the
Euclidean inner product of $\mathbf x$ and $\mathbf y$ is defined as $\mathbf x \cdot \mathbf y=\sum_{i=1}^n x_i y_i$.
\item  The support set of $\mathbf{x}$ is denoted by $\operatorname{supp}(\mathbf{x}) = \{i~|~x_i\neq 0\}$.

\end{itemize}

\subsection{Locally repairable codes}
Let $\mathcal{C}$ be an $[n,k,d]_q$ linear code. Then, $\mathcal{C}$ has a $k \times n$ generator matrix $G$ and an $(n-k) \times n$ parity check matrix $H$. The dual of $\mathcal{C}$ is defined by
\begin{equation*}
\mathcal{C}^{\perp}=\left\{\mathbf{w} \in V_n(q): \mathbf{w} \cdot \mathbf{c}=0 \text { for all } \mathbf{c} \in \mathcal{C}\right\}.
\end{equation*}
The rows of $H$ are codewords of $\mathcal{C}^{\perp}$. Hence, the $k\times n$ generator matrix $G$ and $(n-k)\times n$ parity check matrix $H$ satisfy $GH^T=\mathbf{0}$, where $T$ denotes the transpose of a matrix. There is a well-known distance property of linear codes
as follows.

\begin{lemma}[\cite{ling2004},Theorem 4.5.6 ]
Let $\mathcal{C}$ be a linear code and let $H$ be a parity check matrix for $\mathcal{C}$. Then the minimum distance of $\mathcal{C}$ is not less than $d$ if and only if any $d-1$ columns of $H$ are linearly independent.
\end{lemma}
 Now, we give the formal definition of linear LRCs.

\begin{definition}\label{def:LRC}
 The linear code $\mathcal{C}$ is a locally repairable code (LRC) with locality $r$ if for any $i\in[n]$, there exists a subset $\mathcal R_i\subset[n]\backslash\{i\}$ with $|\mathcal R_i|\leq r$ such that the $i$-th symbol $c_i$ in each codeword $\mathbf{c}=(c_1,c_2,\cdots,c_n)\in \mathcal{C}$ can be recovered by $\{c_j\}_{j\in \mathcal R_i}$, i.e.,
$c_i$ is a linear combination of $\{c_j\}_{j\in \mathcal R_i}$.
The set $\mathcal R_i$ is called a repair set for $c_i$.

\end{definition}

It is well known that two different approaches are used to construct LRC, the generator matrix approach \cite{Gopalan2012} and the parity check matrix approach \cite{Jie2016}. Next, we will introduce the parity check matrix approach to construct LRCs. In order to find a suitable parity check matrix involving locality, we begin with a simple lemma.

\begin{lemma}[\cite{guruswami2019}]\label{lem:1}
 An LRC has locality $r$ if and only if for every coded symbol there exists a codeword $\mathbf{x}$ in $\mathcal{C}^\perp$ whose support set $\operatorname{supp}(\mathbf{x})$ contains $i$ and the size of $\operatorname{supp}(\mathbf{x})$ is at most $r + 1$.
\end{lemma}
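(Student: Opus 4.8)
The plan is to prove the two implications separately; in each direction the work is simply unwinding the relevant definition through the orthogonality relation $\mathbf{x}\cdot\mathbf{c}=0$ that characterizes membership in $\C^\perp$.

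For the ``if'' direction I would start from a codeword $\mathbf{x}\in\C^\perp$ with $i\in\operatorname{supp}(\mathbf{x})$ and $|\operatorname{supp}(\mathbf{x})|\le r+1$, set $\mathcal R_i=\operatorname{supp}(\mathbf{x})\setminus\{i\}$ (so $|\mathcal R_i|\le r$), and use $\mathbf{x}\cdot\mathbf{c}=0$ to write $x_ic_i=-\sum_{j\in\mathcal R_i}x_jc_j$ for every $\mathbf{c}\in\C$. Since $i\in\operatorname{supp}(\mathbf{x})$ forces $x_i\ne 0$, this expresses $c_i$ as a linear combination of $\{c_j\}_{j\in\mathcal R_i}$, which over $\F_2$ reduces to $c_i=\sum_{j\in\mathcal R_i}c_j$; hence $\mathcal R_i$ is a valid recovery set for $c_i$ of size at most $r$, and $\C$ has locality $r$.

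For the ``only if'' direction I would fix $i\in[n]$, take a recovery set $\mathcal R_i\subseteq[n]\setminus\{i\}$ with $|\mathcal R_i|\le r$ and $c_i=\sum_{j\in\mathcal R_i}c_j$ for all $\mathbf{c}\in\C$, and define $\mathbf{x}\in\F_2^n$ by $x_i=1$, $x_j=1$ for $j\in\mathcal R_i$, and $x_j=0$ otherwise. Then $\mathbf{x}\cdot\mathbf{c}=c_i+\sum_{j\in\mathcal R_i}c_j=0$ over $\F_2$ for every $\mathbf{c}\in\C$, so $\mathbf{x}\in\C^\perp$, while $\operatorname{supp}(\mathbf{x})\subseteq\{i\}\cup\mathcal R_i$ contains $i$ and has size at most $r+1$, as required. (For a general linear code over $\F_q$ the same argument works with $x_j=-\lambda_j$, where $c_i=\sum_j\lambda_jc_j$ is the linear recovery relation.)

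Essentially everything here is bookkeeping, so I do not expect a genuine obstacle; the only points that deserve a careful sentence are that the recovery in a linear code is automatically a linear functional of the recovery-set coordinates (which is already built into Definition~\ref{def:LRC}), that $i\in\operatorname{supp}(\mathbf{x})$ is precisely what guarantees $x_i$ is invertible and hence that $c_i$ is genuinely determined, and that passing between $\operatorname{supp}(\mathbf{x})$ and $\mathcal R_i$ can only remove coordinates, so the bounds ``$\le r+1$'' on the support and ``$\le r$'' on the recovery set match up correctly.
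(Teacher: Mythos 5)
Your proof is correct: both directions are the standard unwinding of the definition of $\C^\perp$ against Definition~\ref{def:LRC}, and the key points (that $i\in\operatorname{supp}(\mathbf{x})$ makes $x_i$ invertible, and that over $\F_2$ all nonzero entries are $1$ so the relation collapses to $c_i=\sum_{j\in\mathcal R_i}c_j$) are handled properly. Note that the paper itself gives no proof of this lemma --- it is quoted from the cited reference --- so there is nothing to compare against; your argument is the expected one.
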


 An LRC is said to have $\ell$  disjoint local repair groups if there exist $\ell \triangleq \frac{n}{r+1}$ vectors $\mathbf{h}_1, \mathbf{h}_2, \cdots, \mathbf{h}_\ell$ of $\mathcal{C}^{\perp}$, such that $|\operatorname{supp} (\mathbf{h}_{i})|=r+1$ and $\operatorname{supp}(\mathbf{h}_{i}) \cap \operatorname{supp}(\mathbf{h}_{j})=\emptyset$ for any
$1 \leq i \neq j \leq \ell$. Let $\mathcal{C}$  be an $[n,k]$ binary linear LRC with disjoint local repair groups. The parity check matrix $H$ of $\mathcal{C}$ can be represented as follows:
\begin{equation}\label{eq:pc}
H=\left(\begin{array}{llll}
H_{L} \\
H_G
\end{array}\right)=\left(\begin{array}{llll}
H_{1} & H_{2} & \ldots &H_{\ell} \\
H^{1}_G & H^{2}_G & \ldots & H^{\ell}_G
\end{array}\right),
\end{equation}
where $H_{i}$ is an $\ell \times(r+1)$ submatrix of whose $i$-th row is the all-one vector and the other rows are all-zero vectors and  $H^i_G$ is an $i$-th $(n-k-\ell)\times(r+1)$ submatrix of $H_G$ for $1 \leq i \leq \ell$. Note that, in the following sections, the column vector $\mathbf{h}^i_j$ denotes a column of $H^i_{G}$, which is different from the meaning of $\mathbf{h}_j$.

\subsection{Intersection subspace}
The set of all $t$-dimensional subspaces  of $V_m(q)$ is denoted by $\mathcal{G}_q(m, t)$. Let $U$ and $V$ be subspaces of $V_m(q)$. Then $U\bigcap V=\{\mathbf{v}|\mathbf{v}\in U~and~\mathbf{v}\in V\}$ is called the intersection of $U$ and $V$. It is clear that the intersection  $U\bigcap V$ is also a subspace.

 In particular,
two $t$-dimensional subspaces $U$ and $V$ which belong to  $\mathcal{G}_q(m, t)$ are said to trivially intersect or disjoint if they only have a zero-dimensional intersection, i.e.,  $U\bigcap V=\{\mathbf{0}\}$. A partial $t$-spread of $V_m(q)$  is a collection $S=\{W_1,W_2,\cdots,W_\ell\}$ of $t$-dimensional subspaces from $\mathcal{G}_q(m, t)$ such that $W_i\bigcap W_j=\{\mathbf{0}\}$ for $1\leq i<j\leq \ell$, where $\ell$ is the size of the partial $t$-spread $S$. If $t$ divides $m$ and $span(\bigcup^{\ell}_{i=1}W_i)= V_m(q)$, the partial spread is called a $t$-spread. We denote the number of $t$--dimensional subspaces in the largest partial spread in $V_m(q)$ by $\mu_q(m,t)$. One challenging question is to find the maximum partial size of a $t$-spread. There are few results related to $\mu_q(m,t)$, see below.

\begin{lemma}[\cite{Tor1989}]\label{lam:1}
If $t$ is a divisor of $m$ and $\ell=\frac{q^m-1}{q^t-1}$, then there exists a $t$-spread of $V_m(q)$ with $\ell$ subspaces.
\end{lemma}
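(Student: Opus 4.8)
The plan is to exploit the field structure: identify $V_m(q)$ with $\F_{q^m}$ regarded as an $m$-dimensional vector space over $\F_q$, and use the hypothesis $t \mid m$, which guarantees a subfield $\F_{q^t} \subseteq \F_{q^m}$. The $t$-dimensional subspaces of the spread will be the ``$\F_{q^t}$-lines'' through the origin, i.e., the $\F_q$-subspaces of the form $\alpha\,\F_{q^t} = \{\alpha\beta : \beta \in \F_{q^t}\}$ for $\alpha \in \F_{q^m}^{*}$.

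First I would check that each such set $\alpha\,\F_{q^t}$ is an $\F_q$-subspace of $\F_{q^m}$ of dimension exactly $t$: it contains $0$, it is closed under addition by distributivity in $\F_{q^m}$ together with the fact that $\F_{q^t}$ is an additive subgroup, it is $\F_q$-stable since $\F_q \subseteq \F_{q^t}$, and the map $\beta \mapsto \alpha\beta$ is an $\F_q$-linear bijection of $\F_{q^t}$ (which is $t$-dimensional over $\F_q$) onto $\alpha\,\F_{q^t}$. Next I would organize these subspaces via multiplicative cosets: let $g$ generate the cyclic group $\F_{q^m}^{*}$, and note that $\F_{q^t}^{*}$ is the unique subgroup of order $q^t-1$, namely $\F_{q^t}^{*} = \langle g^{\ell}\rangle$ with $\ell = \frac{q^m-1}{q^t-1}$ (an integer by Lagrange, since $\F_{q^t}^{*}\le\F_{q^m}^{*}$). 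The cosets $g^{i}\F_{q^t}^{*}$, $i = 0,1,\dots,\ell-1$, partition $\F_{q^m}^{*}$ into $\ell$ classes of size $q^t-1$. Putting $W_i = g^{i}\F_{q^t}^{*} \cup \{0\} = g^{i}\F_{q^t}$, the first step shows each $W_i \in \mathcal{G}_q(m,t)$. For $i\neq j$, a nonzero element of $W_i\cap W_j$ would lie in two distinct cosets of $\F_{q^t}^{*}$, which is impossible, so $W_i\cap W_j = \{\mathbf{0}\}$; and $\bigcup_{i=0}^{\ell-1}W_i = \F_{q^m}^{*}\cup\{0\} = \F_{q^m} = V_m(q)$. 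Hence $\{W_0,\dots,W_{\ell-1}\}$ is a $t$-spread of $V_m(q)$, and the size is forced: the $\ell$ pairwise disjoint cosets, each of size $q^t-1$, exhaust the $q^m-1$ nonzero vectors, so the number of parts is $\frac{q^m-1}{q^t-1}=\ell$.

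I do not expect a genuine obstacle, since the statement is classical; the step requiring the most care is verifying that $\alpha\,\F_{q^t}$ is simultaneously an $\F_q$-subspace and has the correct dimension, and then dovetailing the multiplicative coset partition with the additive subspace structure so that ``disjoint as cosets'' upgrades to ``trivially intersecting as subspaces.'' Once that is in place, the spread property and the count follow immediately. As an alternative presentation I could run the argument through a Singer cycle: multiplication by $g$ is an $\F_q$-linear automorphism of $V_m(q)$ of order $q^m-1$ permuting the $W_i$ cyclically, which makes the partition transparent but amounts to the same computation.
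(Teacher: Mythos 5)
Your proof is correct and is exactly the classical subfield/coset construction that the paper relies on: it cites this lemma to an external reference without reproving it, but Example~1 instantiates precisely your $W_i = \alpha^{i}\F_{q^t}$ cosets (with $\gamma=\alpha^{\ell}$ generating $\F_{q^t}^{*}$). No gaps; the key verification that multiplication by $\alpha$ is an $\F_q$-linear bijection, so each coset-plus-zero is a $t$-dimensional subspace and multiplicative disjointness upgrades to trivial intersection, is handled properly.
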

\begin{lemma}[\cite{Etzion2011}]\label{lam:2}
 Let $m\equiv z\bmod t$. Then, for all $q$, we have
\begin{equation}\label{eq:4}
 \mu_q(m,t)\geq\frac{q^m-q^t(q^z-1)-1}{q^t-1}.
\end{equation}
\end{lemma}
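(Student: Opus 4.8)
\emph{Proof plan.} The plan is to build, explicitly, a partial $t$-spread of $V_m(q)$ whose size equals the right-hand side of~\eqref{eq:4}, by ``blowing up'' a full $t$-spread of a codimension-$(t+z)$ subspace through a rank-metric code. Write $m=tu+z$ with $u=\lfloor m/t\rfloor$ and $0\le z\le t-1$. First I would dispose of the trivial cases: if $m<t$ the right-hand side of~\eqref{eq:4} is non-positive, and if $z=0$ the claim is precisely Lemma~\ref{lam:1}; so from now on assume $u\ge1$ and $1\le z\le t-1$. I would also note upfront why a single ``lifted'' MRD code will not suffice: lifting in $V_m(q)=V_t(q)\oplus V_{m-t}(q)$ produces only $q^{m-t}$ pairwise disjoint $t$-subspaces, short of the target by a factor roughly $q^t/(q^t-1)$, and recovering that factor is the reason for the layered construction below.

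Next I would fix a decomposition $V_m(q)=V'\oplus V''$ with $\dim V'=t(u-1)$ and $\dim V''=t+z$, and, since $t\mid t(u-1)$, invoke Lemma~\ref{lam:1} to get a $t$-spread $\{U_1,\dots,U_s\}$ of $V'$ with $s=(q^{t(u-1)}-1)/(q^t-1)$. For each $i$ set $W_i=U_i\oplus V''$ (dimension $2t+z$); inside $W_i$ I want $q^{t+z}$ subspaces of dimension $t$, each a complement of $V''$ in $W_i$, pairwise meeting only in $\{\mathbf 0\}$. Identifying $W_i/V''\cong U_i\cong\F_q^t$, the complements of $V''$ are the graphs of linear maps $\F_q^t\to V''$, i.e.\ of $t\times(t+z)$ matrices over $\F_q$, and two of them intersect trivially exactly when the difference of the associated matrices has rank $t$. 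Hence what I need is a rank-metric code of $t\times(t+z)$ matrices with minimum rank distance $t$; the rank-metric Singleton bound caps this at $q^{t+z}$, and a Gabidulin (MRD) code attains it, so each $W_i$ supplies exactly $q^{t+z}$ subspaces. I would then throw in one further $t$-subspace $T_0\subseteq V''$ (possible since $\dim V''=t+z\ge t$).

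The remaining work is disjointness and counting. Within one block the $q^{t+z}$ complements are disjoint by the rank condition; $T_0$ sits in $V''$ while every block subspace meets $V''$ only in $\mathbf 0$ (being a complement of it), so $T_0$ avoids them all; and for $i\ne j$ one has $W_i\cap W_j=V''$ because $U_i\cap U_j=\{\mathbf 0\}$ and $V'\cap V''=\{\mathbf 0\}$, so a subspace of $W_i$ and one of $W_j$ meet only inside $V''$, hence trivially. This produces a partial $t$-spread of size $s\,q^{t+z}+1$, and a one-line simplification turns this into $\bigl(q^m-q^t(q^z-1)-1\bigr)/(q^t-1)$, the asserted bound. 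I expect the main obstacle to be the per-block step --- arranging $q^{t+z}$ $t$-subspaces that are simultaneously pairwise disjoint \emph{and} disjoint from the fixed $V''$ --- since this is both what forces the rank-metric/MRD machinery and what makes disjointness carry over between blocks; the intersection identity $W_i\cap W_j=V''$ and the final arithmetic are routine, and the degenerate cases $u=1$ (where the bound reads $\mu_q(t+z,t)\ge1$) and $z=0$ (where the construction collapses to Lemma~\ref{lam:1}) are worth recording as consistency checks.
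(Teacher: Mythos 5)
Your construction is correct, and since the paper does not prove this lemma itself --- it is quoted verbatim from \cite{Etzion2011} --- there is no internal proof to measure it against; what you have written is a valid self-contained derivation of the bound. The argument behind the cited result (going back to Beutelspacher, and the form used in \cite{Etzion2011}) is usually presented recursively: split off one $t$-dimensional direct summand, take the $q^{m-t}$ lifted-MRD $t$-subspaces disjoint from the complementary $(m-t)$-dimensional subspace, and recurse inside that complement, giving $\mu_q(m,t)\ge q^{m-t}+\mu_q(m-t,t)$ and hence $\sum_{j=1}^{u-1}q^{tj+z}+1$ after unwinding. Your version flattens the recursion into a single layer: a full $t$-spread of a $t(u-1)$-dimensional summand indexes the blocks, each block $W_i=U_i\oplus V''$ is filled with the $q^{t+z}$ graphs of an MRD code of $t\times(t+z)$ matrices with rank distance $t$, and one extra $t$-subspace $T_0\subseteq V''$ is appended; the count $s\,q^{t+z}+1$ with $s=(q^{t(u-1)}-1)/(q^t-1)$ telescopes to exactly $\bigl(q^m-q^t(q^z-1)-1\bigr)/(q^t-1)$, and your three disjointness checks (within a block via full-rank differences, across blocks via $W_i\cap W_j=V''$, and against $T_0$ because every graph is a complement of $V''$) are all sound, as are the degenerate cases $m<t$, $z=0$, and $u=1$. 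The two routes are equivalent in substance and yield the same family; yours trades the induction for a slightly heavier one-shot bookkeeping. The only ingredient you invoke without proof is the existence of MRD codes for all parameters $t\le t+z$ and minimum rank distance $t$ (Delsarte/Gabidulin); that is standard, but it is the one external fact your argument genuinely depends on and should be cited explicitly if this sketch were expanded into a full proof.
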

Note that the Lemma~\ref{lam:1} is a special case of the Lemma~\ref{lam:2} if $z=0$. In addition, a specific construction for a $t$-spread is given in  \cite{Tor1989} and a specific construction for a partial $t$-spread is given in \cite{Etzion2011}.

\section{Construction by the intersection subspace}\label{se3}
In this section, we begin with a lemma that is essential to construct the parity check matrix of binary linear LRCs with minimum distance $d\geq6$. Then a definition of a desired matrix is given for subsequent constructions of LRCs. Finally, combining the intersection subspaces with a desired matrix generates $k$-optimal binary linear LRCs with disjoint local repair groups. The parameters of $k$-optimal LRCs are derived in Theorem~\ref{Thm:paras} and Theorem~\ref{th3:parity}, respectively.


\begin{lemma}[ \cite{Nam2017}]\label{lem:condition} Consider a binary linear LRC defined by the parity check matrix $H$ in~\eqref{eq:pc}. If the columns of $H_G$ satisfy the following three conditions, then the LRC has minimum distance $d\geq6$ :
\begin{itemize}
\item [(1)]No two column vectors from matrix $H_{G}^{i}$  sum to zero for all $i\in[\ell]$;
\item [(2)]No four column vectors from matrix $H_{G}^{i}$  sum to zero for all $i\in[\ell]$;
\item [(3)]No four column vectors consisting of two columns from matrix $H_{G}^{i}$ and the other two columns from matrix $H_{G}^{j}$ sum to zero for all distinct $i\neq j\in[\ell]$.
\end{itemize}
\end{lemma}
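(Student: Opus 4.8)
The plan is to translate the minimum-distance condition into a statement about linearly dependent columns of $H$ and then show that any such dependence of size at most $5$ is ruled out precisely by conditions (1)--(3). By Lemma~1 (the distance property via parity check matrices), it suffices to prove that every set of at most $5$ columns of $H$ is linearly independent. Over $\F_2$ a set of columns is linearly dependent iff some nonempty subset of them sums to $\mathbf{0}$, so I will show: no collection of $j$ columns of $H$ with $2 \le j \le 5$ sums to zero. The first step is to record the block structure of $H$ from~\eqref{eq:pc}: a column of $H$ indexed by a coordinate lying in the $i$-th repair group has the form $\binom{\mathbf{e}_i}{\mathbf{h}}$, where $\mathbf{e}_i$ is the $i$-th standard basis vector of $\F_2^\ell$ (coming from $H_i$) and $\mathbf{h}$ is the corresponding column of $H_G^i$. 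Hence if we pick $j$ columns and denote by $j_i$ the number of them drawn from group $i$, the top $\ell$ coordinates of their sum equal $\sum_i j_i \mathbf{e}_i \pmod 2$; this is zero iff every $j_i$ is even.

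The core of the argument is then a short case analysis on the partition of $j$ into the even parts $(j_i)$, for $j \in \{2,3,4,5\}$. Since $j \le 5$, the only way for all $j_i$ to be even is: $j=2$ with one group contributing $2$ columns; $j=4$ with either a single group contributing $4$ columns, or two groups each contributing $2$ columns. (The cases $j=3$ and $j=5$ are impossible because an odd number cannot be written as a sum of even nonnegative integers, so those automatically give a nonzero top block and hence linear independence.) For each surviving case, zeroing the top block forces the bottom block $H_G$ to satisfy a relation that is exactly one of the forbidden configurations: $j=2$, one group $\Rightarrow$ two columns of some $H_G^i$ summing to zero, excluded by (1); $j=4$, one group $\Rightarrow$ four columns of some $H_G^i$ summing to zero, excluded by (2); $j=4$, two groups with $2{+}2$ $\Rightarrow$ two columns from $H_G^i$ plus two columns from $H_G^j$ summing to zero, excluded by (3). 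Therefore no $j \le 5$ columns of $H$ are dependent, i.e.\ any $d-1 = 5$ columns are linearly independent, so $d \ge 6$.

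One point that needs care, and is really the only subtlety, is to make sure the three conditions as stated cover the $2{+}2$ split with \emph{distinct} columns and that repeated columns are not an issue: within a single $H_G^i$ the columns chosen in a dependence must be distinct (a column plus itself is $\mathbf{0}$, contributing nothing, so we may assume distinctness), and condition (1) in particular guarantees the columns of each $H_G^i$ are pairwise distinct and nonzero, so the block $H_i$ part never silently collapses. I would also note explicitly that the ordering of groups is irrelevant, so ``two from $H_G^i$ and two from $H_G^j$'' in (3) is symmetric in $i,j$, matching the case analysis. With these remarks the proof is essentially the parity bookkeeping above; the main obstacle is simply being exhaustive in enumerating the admissible size-$\le 5$ column patterns and checking each maps to one of (1)--(3).
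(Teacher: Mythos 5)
Your argument is correct and is the standard proof of this fact; note that the paper itself states this lemma as a result quoted from the reference and gives no proof of its own, so there is nothing to compare against beyond the intended folklore argument, which is exactly what you reconstruct. The parity bookkeeping on the top block $(H_1,\dots,H_\ell)$ forces every repair group to contribute an even number of columns to any zero-sum subset, the odd sizes $1,3,5$ are thereby excluded automatically, and the surviving patterns $2$, $4$, and $2{+}2$ are precisely conditions (1)--(3); your only (harmless) omission is the size-$1$ case, which the same top-block observation disposes of since no column of $H$ is zero.
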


Hence, to obtain a binary linear LRC with minimum distance $d\geq6$, we need to construct the parity check matrix $H$ in~\eqref{eq:pc} of which submatrix $H^i_G$ satisfies the conditions of Lemma~\ref{lem:condition}.
%

 Suppose $W_{1}$, $W_{2}$, $\cdots$, $W_{\ell}$ are  $t$-dimensional subspaces of a vector space $V_m$  such that
$W_i\bigcap W_j= \{\mathbf{0}\}$  for $i\neq j \in[\ell]$. Let $\{\mathbf{e}^i_1,\mathbf{e}^i_2,\cdots,\mathbf{e}^i_t\}$  be a basis of the subspace $W_i$, where $\mathbf{e}^i_j=(e^i_{1j},e^i_{2j},\cdots,e^i_{mj})^{T} \in V_m$ for $j\in[t]$. We shall write the coordinates of the vector $\mathbf{e}^i_{j}$ as the $j$-th column of an $m\times t$  matrix $G_{W_i}$, i.e., $G_{W_i}=[\mathbf{e}^i_1,\mathbf{e}^i_2,\cdots,\mathbf{e}^i_t]$.

\begin{definition}\label{inter_good}
Let $G_U=[\mathbf{u}_1,\mathbf{u}_2,\cdots,\mathbf{u}_s]$ be an $s\times s$  matrix over $\mathbb{F}_2$ with full column rank and column vectors $\mathbf{u}_i=(u_{1i},u_{2i},\cdots,u_{si})^T\in V_s$ for $i\in[s]$. Let $U$ be the span of the columns of $\small{\left(\begin{array}{c}
G_U \\
\mathbf{0}_{m\times s}
\end{array}\right)}$, where  $\mathbf{0}_{m\times s}\in V_m$ is an $m\times s$ all-zero matrix. Define an $(s+m)\times(s+t)$ matrix $G_{M_i}$ over $\mathbb{F}_2$ as follows:
\begin{equation}\label{eq:G_M}
G_{M_i}=\left(\begin{array}{cccc}
G_U&  \mathbf{0}_{s\times t}\\
\mathbf{0}_{m\times s}& G_{W_i}
\end{array}\right)=\left(\begin{array}{ccccccccc}
\mathbf{u}_1 & \mathbf{u}_2 & \cdots & \mathbf{u}_s & \mathbf{0}_{s\times 1}
 & \mathbf{0}_{s\times 1} & \cdots & \mathbf{0}_{s\times 1} \\
 \mathbf{0}_{m\times 1} & \mathbf{0}_{m\times 1} & \cdots & \mathbf{0}_{m\times 1} & \mathbf{e}^i_{1} & \mathbf{e}^i_{2} & \cdots & \mathbf{e}^i_{t} \\
\end{array}\right).
\end{equation}
In the special case that $s=0$, we have $G_{M_i}=G_{W_i}$. Furthermore, the vector space spanned by the columns of $G_{M_i}$ is denoted as $M_i$.
\end{definition}

\begin{remark}
In this paper, we utilize the intersection subspaces $M_i$ for $i\in[\ell]$ to construct the parity check matrix of a binary linear LRC. Furthermore, combining a $t$-spread with a matrix $G_U$, we give an explicit form of such intersection subspaces. From the above construction, each subspace $M_i$ can be generated by a direct sum of two subspaces $U$ and $W_i$ for $i\in[\ell]$, i.e., $M_i=W_{i} \oplus U$. It is not hard to see that $M_i$ is a vector subspace of $V_{m+s}$ with dimension $s+t$. Hence, it is clear that the intersection of $M_1$, $M_2$, $\cdots$, $M_\ell$ only consists of the vector subspace $U$ with dimension $s$.
  \end{remark}

 Later, we will exploit these intersection subspaces and a matrix with a special structure to construct binary linear LRCs. In this paper, we call such a matrix a desired matrix
and give the definition of a desired matrix as follows.

 \begin{definition}\label{def:good-matrix}
Let $r,s$ and $t$ be integers such that $2^t>r> s+t$ and $t>s\geq0$. A binary $(s+t)\times r$ matrix $A=\left(\begin{array}{cccc}A_{1}\\ A_{2} \end{array}\right)=(a_{i,j})\in V_{(s+t)\times r}$ is a  \textbf{desired matrix} if it is a binary matrix with full column rank such that the submatrix $A_2$ has $t$ rows and no two linearly dependent columns; furthermore, if $t\geq3$, then any $4$ columns of $A$ must be linearly independent. Note that $A$  can be viewed as the parity check matrix of an $[r, r-(s+t), \geq5]$ binary linear code for $t\geq3$ (See more details in~\ref{sec:8}).

\end{definition}
\begin{remark}
The key idea of intersection subspaces comes from the construction of matrix $G_{M_i}$,
while matrix $A$ will help to expand the range of the construction parameters and control the minimum distance of binary linear LRCs.
\end{remark}

 Let $G_{M_i}$ be a matrix as defined in equation~\eqref{eq:G_M} for $i\in[\ell]$. Let $A$ be an $(s+t)\times r$ desired matrix. Then the product of each matrix $G_{M_i}$ and $A$ is
{\small\begin{eqnarray}
  G_{M_i}\cdot A
   &=&  \left(\begin{array}{ccccccccc}
u_{11}& u_{12} &\cdots & u_{1s} & \\
u_{21}& u_{22} &\cdots & u_{2s} &\\
\vdots & \vdots & & \vdots & &\multicolumn{2}{c}{\raisebox{1.3ex}[0pt]{\large $\mathbf{0}_{s\times t}$}}\\
u_{s1}& u_{s2} &\cdots & u_{ss} &   \\
&  & && e^i_{11} &e^i_{12} &\cdots & e^i_{1t}\\
&  & && e^i_{21} &e^i_{22} &\cdots & e^i_{2t}\\
&  \multicolumn{2}{c}{\raisebox{1.3ex}[0pt]{ \large $\mathbf{0}_{m\times s}$}}  &  & \vdots & \vdots & & \vdots \\
&  & && e^i_{m1} &e^i_{m2} &\cdots & e^i_{mt}\\
\end{array}\right)\left(\begin{array}{cccc}
a_{11}&a_{12}&\cdots&a_{1r}\\ \nonumber
a_{21}&a_{22}&\cdots&a_{2r} \\
\vdots&\vdots&&\vdots \\
a_{(s+t)1}&a_{(s+t)2}&\cdots&a_{(s+t)r}
\end{array}\right),
\end{eqnarray}}
\noindent where $\mathbf{0}$ denotes a zero matrix.
Let $H^i_G$ denote an $(s+m)\times(r+1)$ matrix
over $\mathbb{F}_2$ which is equal to the matrix $\left(\mathbf{0}_{(s + m)\times1},~G_{M_i}\cdot A\right)$.
Note that every column $\mathbf{h}^i_{j}$ of $H^i_G$ can be indexed by a pair $(i, j)$ for $1 \leq i \leq \ell$ and $0 \leq j \leq r$, where $\mathbf{h}^i_{0}=(0,0,\cdots,0)^T$ and
\begin{equation*}
\mathbf{h}^i_{j\neq0}=(\sum^s_{\ell=1}u_{1\ell}a_{\ell j},\cdots,\sum^s_{\ell=1}u_{s\ell}a_{\ell j}, \sum^t_{\ell=1}e^i_{1\ell}a_{(s+\ell)j},\cdots, \sum^t_{\ell=1}e^i_{m\ell}a_{(s+\ell)j})^{T}.
\end{equation*}
 Then, we propose the following Lemma~\ref{lem:independet} with regard to $H^i_G$, which plays an important role for the main results of Theorem~\ref{Thm:paras}.

\begin{lemma}\label{lem:independet}Let $H^{i}_G=\left(\mathbf{0}_{(s + m)\times1},~G_{M_i}\cdot A\right)$ for $i\in [\ell]$.
The sum of any four columns of $H_G$, which consists of two columns from matrix $H^{i_1}_G$ and the other two columns from matrix $H^{i_2}_G$, is not equal to zero for all distinct $i_1, i_2 \in[\ell]$.
\end{lemma}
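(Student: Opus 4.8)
The plan is to argue by contradiction. Suppose four columns of $H_G$—two of them from $H^{i_1}_G$, say $\mathbf{h}^{i_1}_{j_1}$ and $\mathbf{h}^{i_1}_{j_2}$, and two from $H^{i_2}_G$, say $\mathbf{h}^{i_2}_{j_3}$ and $\mathbf{h}^{i_2}_{j_4}$, with $i_1\neq i_2$—sum to zero. First I would dispose of the degenerate cases where some of these columns are the all-zero column $\mathbf{h}^i_1$: if exactly one column is zero the remaining three nonzero columns would have to sum to zero, and if two are zero the remaining two would have to be equal; in either case I reduce to a shorter sum of nonzero columns coming from at most two blocks, and I can handle those sub-cases directly using the full-column-rank / pairwise-independence properties of the desired matrix $A$ together with the structure of $G_{M_i}$. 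So assume all four columns are nonzero, i.e. $j_1,j_2,j_3,j_4\in\{2,\dots,r+1\}$, with $j_1\neq j_2$ and $j_3\neq j_4$.

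Next I would split every column into its ``top'' part (the first $s$ coordinates, coming from $G_U\cdot A_1$) and its ``bottom'' part (the last $m$ coordinates, coming from $G_{W_i}\cdot A_2$). Writing $\mathbf{a}_j$ for the $j$-th column of the $(s+t)\times r$ desired matrix $A$ (indexing so that column $j$ of $H^i_G$ uses $\mathbf{a}_{j-1}$), and splitting $\mathbf{a}_j=\binom{\mathbf{a}^{(1)}_j}{\mathbf{a}^{(2)}_j}$ into its first $s$ and last $t$ entries, the bottom part of $\mathbf{h}^{i}_{j}$ is $G_{W_i}\,\mathbf{a}^{(2)}_{j-1}$ and the top part is $G_U\,\mathbf{a}^{(1)}_{j-1}$. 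The vanishing of the total sum gives two equations: on the bottom,
\begin{equation*}
G_{W_{i_1}}\bigl(\mathbf{a}^{(2)}_{j_1-1}+\mathbf{a}^{(2)}_{j_2-1}\bigr)=G_{W_{i_2}}\bigl(\mathbf{a}^{(2)}_{j_3-1}+\mathbf{a}^{(2)}_{j_4-1}\bigr),
\end{equation*}
and on the top, $G_U\bigl(\mathbf{a}^{(1)}_{j_1-1}+\mathbf{a}^{(1)}_{j_2-1}+\mathbf{a}^{(1)}_{j_3-1}+\mathbf{a}^{(1)}_{j_4-1}\bigr)=\mathbf{0}$. The left-hand side of the bottom equation lies in $W_{i_1}$ and the right-hand side in $W_{i_2}$, and since $W_{i_1}\cap W_{i_2}=\{\mathbf{0}\}$ both sides must vanish. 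Because the basis vectors $\mathbf{e}^{i}_1,\dots,\mathbf{e}^{i}_t$ are linearly independent, $G_{W_i}$ has full column rank, so I conclude $\mathbf{a}^{(2)}_{j_1-1}+\mathbf{a}^{(2)}_{j_2-1}=\mathbf{0}$ and $\mathbf{a}^{(2)}_{j_3-1}+\mathbf{a}^{(2)}_{j_4-1}=\mathbf{0}$; that is, columns $j_1-1$ and $j_2-1$ of $A_2$ are equal, and likewise columns $j_3-1$ and $j_4-1$.

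The pairwise-independence hypothesis on $A_2$ (``any two distinct column vectors are linearly independent'', which over $\F_2$ means no two columns are equal and none is zero) then forces $j_1=j_2$ and $j_3=j_4$, contradicting $j_1\neq j_2$. This is the crux: the block-diagonal shape of $G_{M_i}$ decouples the bottom coordinates from the top ones, the disjointness $W_{i_1}\cap W_{i_2}=\{\mathbf 0\}$ forces each intra-block pair of $A_2$-columns to coincide, and the structural property of $A_2$ rules that out. The main obstacle is really bookkeeping: making the index shift between columns of $H^i_G$ and columns of $A$ precise, and treating the zero-column sub-cases cleanly so they also reduce to a contradiction with the pairwise-independence (or full-column-rank) property of $A$. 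I would also double-check that the top equation is automatically consistent and plays no essential role beyond the degenerate cases, so that the argument genuinely rests only on the bottom block and the disjointness of the $W_i$'s.
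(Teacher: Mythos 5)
Your proposal is correct and follows essentially the same route as the paper's proof: argue by contradiction, split each column into the $G_U A_1$ (top) and $G_{W_i}A_2$ (bottom) blocks, use $W_{i_1}\cap W_{i_2}=\{\mathbf 0\}$ plus the full column rank of $G_{W_i}$ to force the relevant combinations of $A_2$-columns to vanish, and contradict the defining properties of the desired matrix $A$ (handling the zero-column cases separately, as the paper does in its Case (i)). The only cosmetic difference is that in the non-degenerate case the paper also records the top-block conclusion and invokes the $4$-wise independence of the columns of $A$, whereas you stop at the bottom block and contradict the pairwise independence of the columns of $A_2$ directly; both are valid.
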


\begin{proof}
 Considering the four columns $\mathbf{h}^{i_1}_{j_1}, \mathbf{h}^{i_1}_{j_2}\in H^{i_1}_G$, $\mathbf{h}^{i_2}_{j_3}, \mathbf{h}^{i_2}_{j_4}\in H^{i_2}_G$, without loss of generality, we assume that
$\mathbf{h}^{i_1}_{j_1}+\mathbf{h}^{i
_1}_{j_2}+\mathbf{h}^{i_2}_{j_3}+\mathbf{h}^{i_2}_{j_4}=\mathbf{0}$.

Case (i): The four columns $\{\mathbf h^{i_1}_{j_1},\mathbf h^{i_1}_{j_2},\mathbf h^{i_2}_{j_3},\mathbf h^{i_2}_{j_4}\}$ contain the zero column vector.

1)~Assume that one of the columns is the zero column from $H^{i_1}_G$, i.e., $\mathbf{h}^{i_1}_{j_1}=\mathbf{0}$. Then we have
\begin{equation}\label{eq:07091}
\left\{
  \begin{array}{ll}
   \sum^{s}_{\ell=1}u_{1\ell}a_{\ell j_2}+\sum^{s}_{\ell=1}u_{1\ell}a_{\ell j_3}+\sum^{s}_{\ell=1}u_{1\ell}a_{\ell j_4}=0 \\
   ~~~~~~~~~~~~~~~~~~~\vdots \\
\sum^{s}_{\ell=1}u_{s\ell}a_{\ell j_2}+\sum^{s}_{\ell=1}u_{s\ell}a_{\ell j_3}+\sum^{s}_{\ell=1}u_{s\ell}a_{\ell j_4}=0  \\
\sum^{t}_{\ell=1}e^{i_1}_{1\ell}a_{(s+\ell)j_2}+\sum^{t}_{\ell=1}e^{i_2}_{1\ell}a_{(s+\ell)j_3}+\sum^{t}_{\ell=1}e^{i_2}_{1\ell}a_{(s+\ell)j_4}=0\\
   ~~~~~~~~~~~~~~~~~~~\vdots \\
 \sum^{t}_{\ell=1}e^{i_1}_{m\ell}a_{(s+\ell)j_2}+\sum^{t}_{\ell=1}e^{i_2}_{m\ell}a_{(s+\ell)j_3}+\sum^{t}_{\ell=1}e^{i_2}_{m\ell}a_{(s+\ell)j_4}=0\\
  \end{array}.
\right.
\end{equation}
From~\eqref{eq:07091}, the last $t$ equations out of $m$ equations show that
\begin{equation*}\label{111}
\sum^t_{\ell=1}a_{(s+\ell) j_2}\left(\begin{array}{cccc}e^{i_1}_{1\ell}\\ e^{i_1}_{2\ell}\\ \vdots \\ e^{i_1}_{m\ell} \end{array}\right)+
\sum^t_{\ell=1}(a_{(s+\ell) j_3}+a_{(s+\ell) j_4})\left(\begin{array}{cccc}e^{i_2}_{1\ell}\\ e^{i_2}_{2\ell}\\ \vdots \\ e^{i_2}_{m\ell} \end{array}\right)=\mathbf{0}.
\end{equation*}
Since $W_{i_1}$ and $W_{i_2}$ are disjoint $t$-dimensional subspaces, i.e., $W_{i_1}\bigcap W_{i_2}=\{\mathbf{0}\}$, which means that the linear combinations of the basis $\{\mathbf {e}^{i_1}_1,\mathbf {e}^{i_1}_2,\cdots, \mathbf {e}^{i_1}_t\}$ of $W_{i_1}$ and $\{\mathbf {e}^{i_2}_1,\mathbf {e}^{i_2}_2,\cdots, \mathbf {e}^{i_2}_t\}$ of $W_{i_2}$ are linearly independent. Thus, $\sum^t_{\ell=1}a_{(s+\ell) j_2}\mathbf{e}^{i_1}_{\ell}=\mathbf{0}$ and $\sum^t_{\ell=1}(a_{(s+\ell) j_3}+a_{(s+\ell) j_4})\mathbf{e}^{i_2}_{\ell}=\mathbf{0}$, implying that $a_{(s+\ell) j_2}=0$ and $a_{(s+\ell) j_3}+a_{(s+\ell) j_4}=0$ for all $\ell\in[t]$. A contradiction can be obtained from the definition of the desired matrix $A$, so
 $\mathbf{h}^{i_1}_{j_1}+\mathbf{h}^{i
_1}_{j_2}+\mathbf{h}^{i_2}_{j_3}+\mathbf{h}^{i_2}_{j_4}\neq\mathbf{0}$.

2) In the case that $\mathbf h^{i_1}_{j_1}$ from $H^{i_1}_G$ and  $\mathbf{h}^{i_2}_{j_3}$ from $H^{i_2}_G$ is zero vector respectively, by the similar analysis of case 1), the same result holds.

Case (ii): The four columns $\{\mathbf h^{i_1}_{j_1},\mathbf h^{i_1}_{j_2},\mathbf h^{i_2}_{j_3},\mathbf h^{i_2}_{j_4}\}$ do not contain the zero column vector. Similar to the case (i), we have
\begin{equation}\label{eq:s}
\sum^s_{\ell=1}(a_{\ell j_1}+a_{\ell j_2}+a_{\ell j_3}+a_{\ell j_4})\left(\begin{array}{cccc}u_{1\ell}\\ u_{2\ell}\\ \vdots \\ u_{s\ell} \end{array}\right)=\mathbf{0},
\end{equation}
and
\begin{equation*}\label{eq:e}
\sum^t_{\ell=1}(a_{(s+\ell) j_1}+a_{(s+\ell) j_2})\left(\begin{array}{cccc}e^{i_1}_{1\ell}\\ e^{i_1}_{2\ell}\\ \vdots \\ e^{i_1}_{m\ell} \end{array}\right)+
\sum^t_{\ell=1}(a_{(s+\ell) j_3}+a_{(s+\ell) j_4})\left(\begin{array}{cccc}e^{i_2}_{1\ell}\\ e^{i_2}_{2\ell}\\ \vdots \\ e^{i_2}_{m\ell} \end{array}\right)=\mathbf{0}.
\end{equation*}
Since the matrix $[\mathbf{u}_1,\mathbf{u}_2,\cdots,\mathbf{u}_s]$ is an $s\times s$ matrix with full column rank, this formula~\eqref{eq:s} can be simplified as $a_{\ell j_1}+a_{\ell j_2}+a_{\ell j_3}+a_{\ell j_4}=0$ for all $\ell \in[s]$. On the other hand, by the same argument as in the proof
of the case (i), $\sum^t_{\ell=1}(a_{(s+\ell) j_1}+a_{(s+\ell) j_2})\mathbf{e}^{i_1}_{\ell}=\mathbf{0}$ and $\sum^t_{\ell=1}(a_{(s+\ell) j_3}+a_{(s+\ell) j_4})\mathbf{e}^{i_2}_{\ell}=\mathbf{0}$ can be obtained, which forces $a_{(s+\ell) j_1}+a_{(s+\ell) j_2}=0$ and $a_{(s+\ell) j_3}+a_{(s+\ell) j_4}=0$ for all $\ell\in[t]$. Hence, it is clear that $\sum^{4}_{z=1}a_{\ell j_z}=0$ for all $\ell\in[s+t]$.
By definition, any four columns of the desired matrix $A$ are linearly independent over $\mathbb{F}_2$. This contradiction completes the proof of the lemma.
\end{proof}

With the above preparation, we give the following construction of binary linear LRCs with minimum distance $d\geq6$.

\begin{construction} \label{cons-lrc}
Let $b$, $s$ ,$t$ and $m$ be integers such that $s+t=2b$ and $m\geq4b$. Choose a desired matrix $A$ with size  $2b\times 2^b$ and  a $(2b-s)$-spread $\{W_1,W_2,\cdots,W_\ell\}$ of $V_m$ with size $\ell=\frac{2^m-1}{2^{2b-s}-1}$, where $b>s\geq0$ and $(2b-s)|m$. The submatrix $H^{i}_G$ is given by $H^{i}_G=\left(\mathbf{0}_{(s + m)\times1},~ G_{M_i}\cdot A\right)$ for each $i\in[\ell]$. Then the linear code $\mathcal{C}$ is constructed by parity check matrix $H$ given in~\eqref{eq:pc} with submatrices $H^{i}_G$.
\end{construction}

  Note that a $(2b-s)$-spread $\{W_1,W_2,\cdots,W_\ell\}$ of $V_m$  exists if and only if $(2b-s)$ divides $m$. By Lemma~\ref{lam:1},  it is known that the size of   a $(2b-s)$-spread is $\ell=\frac{2^m-1}{2^{2b-s}-1}$. Additionally, for the existence of the desired matrix $A$, it is required that $b>s\geq0$.
 The submatrix $A_2$ of $A$ can be viewed as  a parity check matrix with parameters $[2^b,2^b-2b-s,3]$. According to the Griesmer bound,
the parameters of this code should satisfy
$
 2^b\geq \sum_{i=0}^{2^b-2b-s}\left\lceil\frac{3}{2^i}\right\rceil,
$
 which follows from $0\leq s<b$.

Henceforth, we will consider a binary linear code obtained from Construction~\ref{cons-lrc}. We have the following theorem.
\begin{theorem}\label{Thm:paras}
Let $b$ be an integer such that $b\geq3$.
The code $\mathcal{C}$ constructed by the parity check matrix $H $ from Construction \ref{cons-lrc} is an~ $[n=(r+1)\ell, k=n-s- \ell-m, d\geq6]$ binary linear LRC with locality $r=2^b$, which is $k$-optimal and attains the bound~\eqref{eq:3}.
\end{theorem}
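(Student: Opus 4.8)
The plan is to verify the three claimed quantities — the parameters $[n,k,d]$, the locality $r$, and $k$-optimality — in that order, since each relies on structure established earlier in Section~\ref{se3}.

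First I would pin down $n$ and the locality. By construction, $H$ in~\eqref{eq:pc} has $\ell$ column-blocks, each $H^i_G = (\mathbf{0}_{(s+m)\times 1},~G_{M_i}\cdot A)$ of width $r+1$, so $n = (r+1)\ell$ is immediate. The top block $H_i$ of~\eqref{eq:pc} contributes, for each $i$, a row which is all-one on $\operatorname{supp}(\mathbf{h}_i)$ (the $i$-th block) and zero elsewhere; hence $\mathbf h_1,\dots,\mathbf h_\ell \in \C^\perp$ have weight $r+1$ and pairwise disjoint supports, so by Lemma~\ref{lem:1} the code has locality $r=2^b$ with $\ell$ disjoint repair groups. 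Next, $k$: the parity-check matrix $H$ has $\ell + (s+m)$ rows, so $n-k \le \ell + s + m$, with equality iff the rows of $H$ are linearly independent. Independence of the $\ell$ ``top'' rows from each other and from the bottom $s+m$ rows is clear from the block structure; independence among the bottom $s+m$ rows follows because each $G_{M_i}$ has full column rank $s+t$ and $A$ has full column rank $r$, so $G_{M_i}\cdot A$ has rank $s+t=2b$, and one checks that the union of the row spaces of the $H^i_G$ spans the full $(s+m)$-dimensional space (the $U$-part is hit by every block via $G_U=I_s$, and the $W_i$-parts jointly span $V_m$ since $\{W_i\}$ is a spread). This gives $k = n - s - \ell - m$.

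The core of the argument is $d \ge 6$, which by Lemma~\ref{lem:condition} reduces to checking conditions (1), (2), (3) on the columns of $H_G$. Condition (3) is exactly Lemma~\ref{lem:independet}, already proved. For (1) and (2), fix a block $i$ and note that the nonzero columns of $H^i_G$ are $G_{M_i}\cdot \mathbf{a}_j$ where $\mathbf a_j$ ranges over the columns of $A$; since $G_{M_i}$ is injective on its column space, a relation $\sum_{z} \mathbf h^i_{j_z} = \mathbf 0$ among two or four such columns (possibly including the zero column $\mathbf h^i_1$) pulls back to a relation $\sum_z \mathbf a_{j_z} = \mathbf 0$ among the corresponding two or four columns of $A$ (with the convention that the zero column of $H^i_G$ corresponds to the zero vector). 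But $A$ is a desired matrix: any two columns are linearly independent (ruling out (1)) and, since $t\ge 3$, any four columns are linearly independent (ruling out (2)). One subtlety to handle carefully is the zero column $\mathbf h^i_1$: a ``four-column'' sum involving it collapses to a two- or three-column relation in $A$, which is still excluded because $A$ has full column rank and any two columns are independent; this is the step I expect to require the most bookkeeping, though it is routine.

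Finally, $k$-optimality: substitute $r = 2^b$, $n = (r+1)\ell$, $k = n - s - \ell - m$, and $\ell = \frac{2^m-1}{2^{2b-s}-1}$ into the simplified bound $k \le \frac{rn}{r+1} - \log_2\!\left(1 + \frac{rn}{2}\right)$ valid under $n \ge 5(r+1)(r+2)$. Since $\frac{rn}{r+1} = r\ell = n - \ell$, the bound reads $k \le n - \ell - \log_2(1 + rn/2)$, so it suffices to show $\big\lceil \log_2(1 + 2^{b-1} n)\big\rceil = s + m$, i.e. that $2^{s+m-1} \le 1 + 2^{b-1}(2^b+1)\ell < 2^{s+m}$. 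Using $\ell = \frac{2^m-1}{2^{2b-s}-1}$ one estimates $2^{b-1}(2^b+1)\ell = \frac{2^{b-1}(2^b+1)(2^m-1)}{2^{2b-s}-1}$; since $2^b+1$ and $2^{2b-s}-1$ are close to $2^b$ and $2^{2b-s}$ respectively, this is of order $2^{m+s-1}$, and a direct inequality chase (using $b\ge 3$, $1\le s<b$, $m\ge 4b$) confirms the two-sided bound, so the constructed code meets~\eqref{eq:3} with equality and is $k$-optimal. The main obstacle is this last numerical estimate, which must be carried out cleanly; everything else is a direct consequence of Lemmas~\ref{lem:condition} and~\ref{lem:independet} together with the full-rank properties built into Definitions~\ref{inter_good}, \ref{def:good-matrix} and Construction~\ref{cons-lrc}.
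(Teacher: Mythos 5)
Your proposal is correct and follows essentially the same route as the paper: locality and $n$ from the block structure of~\eqref{eq:pc}, $d\ge 6$ via Lemma~\ref{lem:condition} with Lemma~\ref{lem:independet} covering the cross-block case and the injectivity of $G_{M_i}$ together with the $4$-wise (hence $2$- and $3$-wise) independence of the columns of $A$ covering the within-block cases, and the two-sided estimate $2^{m+s-1}<1+2^{b-1}(2^b+1)\ell\le 2^{m+s}$ for $k$-optimality. The only divergence is that you pin down $k=n-\ell-s-m$ exactly by a rank argument on $H$ (the column spaces of the blocks $G_{M_i}\cdot A$ sum to $U\oplus V_m$), whereas the paper records only $k\ge n-\ell-s-m$ from the row count and lets the upper bound~\eqref{eq:3} force equality; both work, though for your rank computation to parse you should read the definition's ``full column rank'' of $A$ as full row rank $s+t$, and the left inequality in your ceiling condition should be strict ($2^{m+s-1}<1+\frac{rn}{2}$), which the parameters do satisfy.
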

\begin{proof} This proof consists of two parts. In Part $1$ we will prove that the dimension $k$ of the corresponding codes achieves the bound~\eqref{eq:3}, i.e., $k=n-s- \ell-m$. As to Part $2$, we will show that the minimum distance $d\geq6$.

Part $1$: It is easy to determine the parameter $n=(r+1)\ell$, $k\geq \frac{rn}{(r+1)}-s-m$ and $r=2^b$ by the parity check matrix $H$ in Construction \ref{cons-lrc}, where $\ell=\frac{2^m-1}{2^{2b-s}-1}$. Clearly,
\begin{equation*}
 \min \left\{\log _{2}\left(1+\frac{r n}{2}\right), \frac{r n}{(r+1)(r+2)}\right\}=\log _{2}\left(1+\frac{r n}{2}\right)
\end{equation*} can be obtained when $m\geq4b$ and $\ell=\frac{2^m-1}{2^{2b-s}-1}$.
 By the bound~\eqref{eq:3},
 \begin{equation*}
 k\leq\frac{rn}{r+1}-\left\lceil\log_2(1+\frac{rn}{2})\right\rceil=\frac{rn}{r+1}-\left\lceil\log_2(1+2^{b-1}(2^b+1)\ell)\right\rceil=\frac{rn}{r+1}-m-s \end{equation*}
  as a result of  $2^{m+s-1}<1+2^{b-1}(2^b+1)\ell\leq2^{m+s}$.
  Hence, we obtain $k= n-\ell-s-m$.

Part $2$:  To verify that the code $\mathcal{C}$ has the minimum distance $d\geq6$ is equivalent to showing that the submatrix $H^i_G$ of $H$ satisfies the conditions in Lemma \ref{lem:condition}. Notably, the sum of the first $\ell$ rows of $H$ is an all-one vector. Thus the minimum distance of $\mathcal{C}$ must be even.

Case (i): For any two columns $\mathbf{h}^{i}_{j_1}$ and $\mathbf{h}^{i}_{j_2}$ vectors from $H^i_G$, it is obvious that $\mathbf{h}^{i}_{j_1}+\mathbf{h}^{i}_{j_2}\neq\mathbf{0}$. Therefore, $H^i_G$ satisfies condition (1) in Lemma~\ref{lem:condition}.

Case (ii): Consider the four columns $\{\mathbf{h}^{i_z}_{j_z}\}^4_{z=1}$ from $H$ belong to the same block, i.e., $i_1=i_2=i_3=i_4$.
Without loss of generality, we assume in the contradiction method that $\mathbf{h}^{i_1}_{j_1}+\mathbf{h}^{i_1}_{j_2}+\mathbf{h}^{i_1}_{j_3}+\mathbf{h}^{i_1}_{j_4}=\mathbf{0}$. Similar to the proof of Lemma \ref{lem:independet},  $\sum^s_{\ell=1}(\sum^4_{z=1}a_{\ell j_z})\mathbf{u}_\ell=\mathbf{0}$ and  $\sum^t_{\ell=1}(\sum^4_{z=1}a_{(s+\ell) j_z})\mathbf{e}^{i_1}_\ell=\mathbf{0}$.
Since $\mathbf{u}_1, \mathbf{u}_2, \cdots,\mathbf{u}_s,\mathbf{e}^{i_1}_{s+1},\cdots,\mathbf{e}^{i_1}_{s+t}$ are linearly independent, $\sum^4_{z=1}a_{\ell j_z}=0$ for all $\ell\in[s+t]$, which implies that $\mathbf{a}_{j_1},\mathbf{a}_{j_2},\mathbf{a}_{j_3},\mathbf{a}_{j_4}$ from $A$ are linearly dependent. This result contradicts with the definition of $A$. Thus for any four columns $\{\mathbf{h}^{i_z}_{j_z}\}^4_{z=1}$, $\mathbf{h}^{i_1}_{j_1}+\mathbf{h}^{i_1}_{j_2}+\mathbf{h}^{i_1}_{j_3}+\mathbf{h}^{i_1}_{j_4}\neq\mathbf{0}$.
In particular, if $\mathbf{0}\in \{\mathbf{h}^{i_z}_{j_z}\}^4_{z=1}$,  $\sum^4_{z=2}a_{\ell j_z}=0$ also hold for all
$\ell\in[s+t]$. By the definition of $A$, $\sum^4_{z=1}\mathbf{h}^{i_z}_{j_z}\neq\mathbf{0}$. Hence, $H^i_G$ satisfies condition (2) in Lemma~\ref{lem:condition}.


Case (iii): Two of $\{\mathbf{h}^{iz}_{jz}\}^4_{z=1}$ belong to one block and the other two lie in a different block. Then their sum is not equal to zero by Lemma \ref{lem:independet}, proving that $H^i_G$ satisfies the condition (3) in Lemma~\ref{lem:condition}.

 As a consequence, the lower part $H_G$ of $H$ satisfies three conditions in Lemma \ref{lem:condition}. This completes the proof of Theorem~\ref{Thm:paras}.
 \end{proof}

Next, we give two examples to illustrate the corresponding construction in
detail. Example~\ref{ex:1}, by Theorem~\ref{Thm:paras}, shows how to construct the $k$-optimal binary linear LRC from
Construction~\ref{cons-lrc}, Example~\ref{ex:2} is a special case when $s = 0$.

 \begin{example}\label{ex:1}
 Let $b=3$, $s=2$ and $m=12$ in Construction~\ref{cons-lrc}. Let $\{W_{1},W_{2},\cdots,W_{273}\}$ be a $4$-spread of $V_{12}$. Denote a basis of $W_i$ by $\{\mathbf {e}^{i}_1,\mathbf {e}^{i}_2,\mathbf {e}^{i}_3,\mathbf {e}^{i}_4\}$ for $i \in[273]$.
Then we choose a matrix $G_{M_i}$ and a desired matrix $A_{6\times 8}$ as follows:
\begin{equation*}
G_{M_i}=\left(\begin{array}{ccccccc}
1&1&0&0&0&0\\
0&1&0&0&0&0\\
\mathbf{0}&\mathbf{0}&\mathbf{e}^i_1& \mathbf{e}^i_2& \mathbf{e}^i_3& \mathbf{e}^i_4
\end{array}\right),\quad A_{6\times 8}=
\left(\begin{array}{llllllll}
A_1\\
A_2
\end{array}\right),
\end{equation*}
where $A_1=\left(\begin{array}{llllllll}
1 &0 & 0 & 0 & 0 & 0 & 1 & 0 \\
0 &1 & 0 & 0 & 0 & 0 & 0 & 1 \\
\end{array}\right)$ and $A_2=\left(\begin{array}{llllllll}
1 &1 & 1 & 0 & 0 & 0 & 0 & 0 \\
1 &0 & 0 & 1 & 0 & 0 & 1 & 1 \\
0 &1 & 0 & 0 & 1 & 0 & 1 & 0 \\
0 &1 & 0 & 0 & 0 & 1 & 1 & 1 \\
\end{array}\right)$.
For example, let $\alpha$ be a primitive element of $\mathbb{F}_{2^{12}}$ such that $\alpha^{12}+\alpha^7 + \alpha^6 + \alpha^5 + \alpha^3 + \alpha + 1=0$. Let $\ell=\frac{2^{12}-1}{2^4-1}$ and $\gamma=\alpha^{\ell}$. We get a basis $\{\alpha^0,\alpha^0\gamma,\alpha^0\gamma^2,\alpha^0\gamma^3\}$ and a basis $\{\alpha^1,\alpha^1\gamma,\alpha^1\gamma^2,\alpha^1\gamma^3\}$ of subspace $W_1$ and subspace $W_2$ respectively.
Then we have
\begin{equation*}
G_{M_i}A=\left(\begin{array}{llllllllllllllllll}
1  &&1  &&0 & &0 & &0 & &0&  &1&  &1\\
0  &&1 &&0  &&0 & &0 & &0  &&0&  &1\\
\mathbf{e}^i_1+\mathbf{e}^i_2 & &\mathbf{e}^i_1+\mathbf{e}^i_3+\mathbf{e}^i_4 & &\mathbf{e}^i_1&& \mathbf{e}^i_2 &&\mathbf{e}^i_3 &&\mathbf{e}^i_4 && \mathbf{e}^i_2+\mathbf{e}^i_3+\mathbf{e}^i_4 & & \mathbf{e}^i_2+\mathbf{e}^i_4 \\
\end{array}\right),
\end{equation*}
where $\mathbf{e}^i$ is an element in $\mathbb{F}_{2^{12}}$. Columns of $H^i_G$ are binary
expansions of the each column vector $(\mathbf{0}, G_{M_i}\cdot A)$. For example, fixing a basis $\{\alpha^0, \alpha^1,\ldots,\alpha^{11}\}$, from the submatrix $H^1_G$, $\mathbf{e}^1_1+\mathbf{e}^1_2=\alpha^0+\gamma=\alpha^{10} + \alpha^9 + \alpha^8 + \alpha^4 + \alpha^3 + \alpha^2 + 1=(1, \alpha,\ldots,\alpha^{11}) \cdot(1,0,1,1,1,0,0,0,1,1,1,0)^T$. Thus the  binary
expansion of the vector of $\mathbf{e}^1_1+\mathbf{e}^1_2$ with respect to the basis is $(1,0,1,1,1,0,0,0,1,1,1,0)$.
Then  we obtain the matrices $H^1_G$ and $H^2_G$ as follows by expanding the column vectors of the submatrices $G_{M_i}\cdot A$ $(i=1,2)$ with respect to the bases respectively:
\begin{equation*}
 H^1_G=
\left(\begin{array}{llllllllllllllllllllllll}
 0& 1& 1& 0& 0& 0& 0& 1& 1 \\
 0& 0& 1& 0& 0& 0& 0& 0& 1 \\
 0& 1& 1& 1& 0& 1& 1& 0& 1 \\
 0& 0& 1& 0& 0& 1& 0& 1& 0 \\
0& 1& 0& 0& 1& 1& 1& 1& 0\\
0& 1& 1& 0& 1& 1& 0& 0& 1\\
 0 &1 &0 &0 &1 &1 &1 &1 &0\\
 0 &0& 0 &0& 0 &1& 1& 0& 1\\
 0& 0& 1& 0& 0 &0 &1& 1& 1\\
 0 &0 &0 &0 &0 &0 &0 &0 &0\\
 0 &1 &1 &0 &1 &0 &1 &0 &0\\
0 &1 &1 &0 &1& 1 &0 &0 &1\\
0 &1 &0 &0 &1 &0 &0 &1 &1\\
0 &0 &1 &0 &0 &1 &0 &1 &0
\end{array}\right)\quad H^{2}_G=
\left(\begin{array}{lllllllll}
0&1& 1& 0& 0& 0& 0& 1& 1 \\
0 &0 &1 &0 &0 &0 &0 &0& 1 \\
0&0 &1& 0& 1& 0& 0& 1& 0\\
0&1 &1& 1 &0& 0& 1& 0& 1 \\
0&0& 1 &0 &1 &0 &0 &1 &0 \\
0& 0& 0& 1& 0& 1& 0& 1& 1\\
0& 1 &0 &0 &1 &1 &0 &1 &1 \\
0& 0& 0 &1 &0 &1 &0 &1 &1 \\
0& 1& 1 &1 &0& 0 &0 &1& 0  \\
0& 1 &0 &1& 1& 0& 0 &0 &0  \\
0& 0 &0 &0 &0 &0 &0 &0 &0 \\
0& 0 &0 &1 &0 &1 &0 &1 &1  \\
0& 1 &0& 0 &1& 1 &0 &1 &1 \\
0& 1 &1& 0& 0& 1 &0 &0 &1
\end{array}\right)
\end{equation*}
It can be verified that any $5$ columns of $H$ in~\eqref{eq:pc} are linearly independent, so parity check matrix $H$ defines a $[2457,2170,6]$ $k$-optimal binary linear LRC with locality $r = 8$ by Theorem~\ref{Thm:paras}.

\end{example}
\begin{example}\label{ex:2}
 Taking $t=2$ and $s=0$ in the above example. Let $\left\{W_{0}, W_{1}, W_{2}, W_{3}, W_{4}\right\}$ be a $2$-spread of $V_{4}$ and $\{\mathbf{e}^i_1, \mathbf{e}^i_2\}$ be a basis of subspace $W_i$. By choosing
\begin{equation*}
  A=\left(\begin{array}{ll}
1 & 0 \\
0 & 1
\end{array}\right)\quad \mbox{and} \quad G_{M_i}=\left(\begin{array}{ccccc}
\mathbf{e}^i_1& \mathbf{e}^i_2
\end{array}\right),
\end{equation*}
we obtain a $[15, 6, 6]$  $k$-optimal binary linear LRC with locality $r=2$. Similarly, when $s=0$, more $k$-optimal binary linear LRCs are listed in the following Table~\ref{table-0824}.
    \begin{table}[ht]
  \centering
  {
  \caption{k-optimal binary linear LRCs with $d=6$}\label{table-0824}
\begin{tabular}{|c|c|c|c|c|c|c|c|c|c|c|c|c|c|}
  \hline
    $r$ & 2& 2& 2& 4& 4& 5& 6\\ \cline{1-8}
     $k$  & 6&  36& 162&  21&  210&  60&  155\\ \cline{1-8}
  $n$ & 15 &  63 & 255&  36&  292&  85&  195\\
  \hline
 \end{tabular}
     }
\end{table}
\end{example}

\begin{remark}
Note that an LRC with the same parameters as in Example~\ref{ex:2} was also constructed in  Example $2$ of \cite{Anyu_2019}. Correspondingly, taking $s=0$ in Construction \ref{cons-lrc} and Theorem \ref{Thm:paras}, we obtain an  $[n=\frac{2^{m}-1}{2^{t}-1}, k=\frac{r n}{r+1}-m, d \geq 6]$ binary linear LRC with locality $r=2^{t}$, which includes the construction of $k$-optimal binary linear LRCs in \cite{Anyu_2019}.
\end{remark}

In fact, Construction~\ref{cons-lrc} generates a family of $k$-optimal binary linear LRCs with locality $r=2^b$ when $\ell$ belongs to
a determined range. This point is presented in detail in  Theorem~\ref{tm:op-l}.
\begin{theorem}\label{tm:op-l}
Assume that $r=2^{b}$ for $b\geq3$ and let $m,s,b$ be integers such that $m \geq 4 b$ and $0\leq s< b$. If
\begin{equation*}
    \frac{2^{m+s-1}-1}{2^{b-1}(2^{b}+1)}<\ell \leq \frac{2^m-1}{2^{2b-s}-1},
\end{equation*}
there exists an $\left[n=(r+1)\ell, k=r \ell-m-s, d\geq6\right]$ binary linear LRC with locality $r=2^{b}$, which is $k$-optimal with respect to the bound~\eqref{eq:3}.
\end{theorem}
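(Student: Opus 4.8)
The plan is to bootstrap Theorem~\ref{tm:op-l} from Construction~\ref{cons-lrc} and Theorem~\ref{Thm:paras} by a shortening/puncturing argument on the repair-group level, rather than re-deriving a parity check matrix from scratch. First I would observe that Theorem~\ref{Thm:paras} already gives an optimal $[n_0=(r+1)\ell_0,\,k_0=n_0-s-\ell_0-m,\,d\ge 6]$ LRC whenever a full $(2b-s)$-spread of $V_m$ of size $\ell_0=\frac{2^m-1}{2^{2b-s}-1}$ exists and $m\ge 4b$. For a target $\ell$ with $\frac{2^{m+s-1}-1}{2^{b-1}(2^b+1)}<\ell\le\ell_0$, the idea is simply to keep only $\ell$ of the $\ell_0$ pairwise-disjoint subspaces $W_1,\dots,W_{\ell_0}$ — any $\ell$ of them still pairwise intersect trivially, so all three conditions of Lemma~\ref{lem:condition} and the conclusion of Lemma~\ref{lem:independet} continue to hold verbatim for the restricted family $\{W_1,\dots,W_\ell\}$. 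Running Construction~\ref{cons-lrc} with this sub-family produces a parity check matrix $H$ of the form~\eqref{eq:pc} with $\ell$ blocks, hence a binary LRC with $n=(r+1)\ell$, locality $r=2^b$, and $d\ge 6$ by exactly the argument of Part~2 of the proof of Theorem~\ref{Thm:paras}.

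Next I would pin down the dimension. The parity check matrix has $\ell+(s+m)$ rows by construction, but a priori only $k\ge n-\ell-s-m$; I would argue the rows are linearly independent (the top $\ell$ rows are disjoint-support indicators, and on the bottom $(s+m)$ coordinates each $G_{M_i}A$ has rank $s+t=2b$ with the $U$-part shared, so the full rank of $H_G$ is $s+m$ provided $m\ge t$, which holds since $m\ge 4b\ge t$), giving $k=n-\ell-s-m=r\ell-m-s$ exactly. It then remains to check that this $k$ meets the bound~\eqref{eq:3} with equality. Here I would reproduce Part~1 of the proof of Theorem~\ref{Thm:paras}: since $m\ge 4b$ we have $n\ge 5(r+1)(r+2)$, so $\min\{\log_2(1+rn/2),\,rn/((r+1)(r+2))\}=\log_2(1+rn/2)$ and~\eqref{eq:3} reads $k\le \frac{rn}{r+1}-\lceil\log_2(1+2^{b-1}(2^b+1)\ell)\rceil$. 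The hypothesis $\frac{2^{m+s-1}-1}{2^{b-1}(2^b+1)}<\ell\le\frac{2^m-1}{2^{2b-s}-1}$ is precisely equivalent to $2^{m+s-1}<1+2^{b-1}(2^b+1)\ell\le 2^{m+s}$ (the right inequality because $2^{b-1}(2^b+1)=\frac{2^{2b-s}-1}{2^{-1}\cdot\text{stuff}}$… more carefully, $1+2^{b-1}(2^b+1)\cdot\frac{2^m-1}{2^{2b-s}-1}\le 2^{m+s}$ follows from $2^{b-1}(2^b+1)(2^m-1)\le (2^{m+s}-1)(2^{2b-s}-1)$, which I would verify by expanding and using $b\ge3$, $1\le s<b$). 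Consequently $\lceil\log_2(1+2^{b-1}(2^b+1)\ell)\rceil=m+s$, so the bound gives $k\le \frac{rn}{r+1}-m-s$, matching the constructed $k$; hence the code is optimal.

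The main obstacle I anticipate is the arithmetic verification that the stated range of $\ell$ is exactly the interval on which $\lceil\log_2(1+2^{b-1}(2^b+1)\ell)\rceil=m+s$, in particular checking that the \emph{upper} endpoint $\ell_0=\frac{2^m-1}{2^{2b-s}-1}$ still satisfies $1+2^{b-1}(2^b+1)\ell_0\le 2^{m+s}$; this is where the constraints $b\ge3$ and $1\le s<b$ get used, and it is the one genuinely computational step. A secondary (but routine) point is justifying that dropping spread elements preserves all the disjointness/independence properties — this is immediate since every hypothesis in Lemmas~\ref{lem:condition}, \ref{lem:independet} and in Part~2 of Theorem~\ref{Thm:paras} is quantified over pairs/quadruples of blocks and is inherited by any sub-collection. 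I would also remark that the existence of the full $(2b-s)$-spread of size $\ell_0$ in $V_m$ requires $(2b-s)\mid m$, which is available via Lemma~\ref{lam:1}; for the sub-family construction we only need $\ell\le\ell_0$ many mutually disjoint $(2b-s)$-subspaces, which the spread supplies.
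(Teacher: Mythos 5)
Your proposal is correct and follows essentially the same route as the paper: run Construction~\ref{cons-lrc} with $\ell$ of the pairwise-disjoint subspaces to get $k\ge n-\ell-m-s$ and $d\ge 6$, then use the hypothesis on $\ell$ to show $2^{m+s-1}<1+2^{b-1}(2^b+1)\ell\le 2^{m+s}$ so that bound~\eqref{eq:3} forces $k\le \frac{rn}{r+1}-m-s$, whence equality. Your separate full-rank argument for the parity check matrix is unnecessary (and somewhat under-justified for a proper sub-collection of the spread), but it is harmless since the two-sided bound already pins down $k$ exactly, as in the paper.
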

\begin{proof} By Theorem~\ref{Thm:paras}, $\mathcal{C}$ is an $[n=(r+1)\ell, k\geq \frac{rn}{r+1}-m-s, d\geq6]$ binary linear LRC. Hence, we need to show that $k\leq r \ell-m-s$. Due to the condition that
\begin{equation*}
\frac{2^{m+s-1}-1}{2^{b-1}(2^{b}+1)}<\ell \leq \frac{2^m-1}{2^{2b-s}-1},
\end{equation*}
we have
\begin{equation}\label{eq:6} 2^{m+s-1}<1+2^{b-1}(2^{b}+1)\ell\leq\frac{2^{b-1}(2^{b}+1)(2^m-1)}{2^{2b-s}-1}+1\leq 2^{m+s}.
\end{equation}
 Furthermore, by the bound~\eqref{eq:3} and the formula~\eqref{eq:6},
\begin{eqnarray}
\nonumber
  k\leq \frac{rn}{r+1}-\left\lceil\log_2(1+\frac{rn}{2})\right\rceil&=&\frac{rn}{r+1}-  \lceil\log_2(1+\frac{r(r+1)\ell}{2})\rceil\\ \nonumber
   &=& \frac{rn}{r+1}-\lceil\log_2(1+2^{b-1}(2^b+1)\ell) \rceil\\ \nonumber
   &=& \frac{rn}{r+1}-m-s.
\end{eqnarray}
 As $k\geq\frac{rn}{r+1}-m-s$ in Theorem \ref{Thm:paras}, $k=\frac{rn}{r+1}-m-s$, which is $k$-optimal with respect to the bound~\eqref{eq:3}.
\end{proof}

Notice that a necessary condition for the existence of the $(2b-s)$-spread is $(2b-s)\mid m$ in Construction \ref{cons-lrc}. This condition restricts the parameters of LRC codes constructed using intersection subspace. For the case of $(2b-s)\nmid m$, we utilize the partial $(2b-s)$-spread of $V_m$ to replace the $(2b-s)$-spread. Although the size of a maximum partial spread of $V_m$ is not known when $(2b-s)\nmid m$, an explicit construction for a partial $t$-spread of size $\frac{q^m-q^{(2b-s)}(q^z-1)-1}{q^{(2b-s)}-1}$ is presented in \cite{Etzion2011}, where $z\equiv m \bmod (2b-s)$. Hence, we obtain the following theorem.
\begin{theorem}\label{th3:parity}
  Let $m\geq4b$. There exists an $k$-optimal binary linear LRC with parameters $[n=(r+1)\ell, k=n-\ell-s-m,d\geq6]$ and locality $r=2^b$ if there exists a partial $(2b-s)$-spread of $V_m$ for $(2b-s)\nmid m$, where $\ell=\frac{2^{m}-2^{(2b-s)}(2^z-1)-1}{2^{(2b-s)}-1}$, $0\leq s< b$ and $z\equiv m\bmod(2b-s)\leq b$.
 \end{theorem}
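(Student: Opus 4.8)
The plan is to imitate the proof of Theorem~\ref{Thm:paras}, replacing the $(2b-s)$-spread by a partial $(2b-s)$-spread of the size guaranteed by Lemma~\ref{lam:2}, and then re-checking that the dimension count still meets the bound in~\eqref{eq:3}. Concretely, set $t=2b-s$ and take a partial $t$-spread $\{W_1,\dots,W_\ell\}$ of $V_m$ with $\ell=\frac{2^m-2^{t}(2^z-1)-1}{2^{t}-1}$, $z\equiv m\bmod t$, which exists by Lemma~\ref{lam:2} (the explicit construction of \cite{Etzion2011}). Form the matrices $G_{M_i}=\mathrm{diag}(G_U,G_{W_i})$ and $H^i_G=(\mathbf 0_{(s+m)\times 1},\,G_{M_i}\cdot A)$ exactly as in Definition~\ref{inter_good} and Construction~\ref{cons-lrc}, and assemble the parity-check matrix $H$ as in~\eqref{eq:pc}. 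Since the $W_i$ are still pairwise disjoint $t$-dimensional subspaces, the key structural fact used in Lemma~\ref{lem:independet} — that linear combinations of the bases of $W_{i_1}$ and $W_{i_2}$ are linearly independent — is unchanged, so Lemma~\ref{lem:independet} applies verbatim, and the three conditions of Lemma~\ref{lem:condition} hold by the same case analysis as in Part~2 of the proof of Theorem~\ref{Thm:paras}. This gives $d\ge 6$ and locality $r=2^b$ with disjoint repair groups.

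The remaining work is the dimension computation (Part~1), which is where the partial-spread size enters and where I expect the only real friction. The code has length $n=(r+1)\ell$ and $n-k=\ell+(s+m)$ from the shape of $H$, so $k=n-\ell-s-m$; what must be checked is that this matches the upper bound~\eqref{eq:3}. As in Theorem~\ref{Thm:paras}, under $m\ge 4b$ and the given $\ell$ one has $\frac{rn}{(r+1)(r+2)}\ge\log_2(1+\frac{rn}{2})$, so the bound reads $k\le \frac{rn}{r+1}-\lceil\log_2(1+2^{b-1}(2^b+1)\ell)\rceil$. Thus it suffices to prove
\begin{equation*}
2^{m+s-1}<1+2^{b-1}(2^b+1)\,\frac{2^{m}-2^{2b-s}(2^z-1)-1}{2^{2b-s}-1}\le 2^{m+s},
\end{equation*}
so that the ceiling of the logarithm equals $m+s$. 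The right-hand inequality is the delicate one: since $2^{b-1}(2^b+1)=\frac{2^{2b-s}-1}{2}\cdot\frac{2^{2b}+2^b}{2^{2b-s}-1}$ is not quite a clean multiple of $2^{2b-s}-1$, I would clear denominators and reduce the claim to $2^{b-1}(2^b+1)\bigl(2^{m}-2^{2b-s}(2^z-1)-1\bigr)+2^{2b-s}-1\le 2^{m+s}(2^{2b-s}-1)$, i.e. to $2^{m+s}-2^{b-1}(2^b+1)\bigl(2^{2b-s}(2^z-1)+1\bigr)+2^{2b-s}-1\le 2^{m+s}(2^{2b-s}-1)-2^{b-1}(2^b+1)2^m+2^{m+s}$; after the $2^{m+s}$ terms cancel appropriately this becomes an inequality purely in $b,s,z$ that holds because $z\le b$ and $1\le s<b$ (the term $2^{b-1}(2^b+1)2^{2b-s}(2^z-1)$ is the dominant positive contribution on the left and is absorbed by $2^{m+s}(2^{2b-s}-1)$ thanks to $m\ge 4b$). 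The left-hand inequality $2^{m+s-1}<1+2^{b-1}(2^b+1)\ell$ is easier and follows from $\ell>\frac{2^{m+s-1}}{2^{b-1}(2^b+1)}$, which in turn is a short estimate on $\ell$ using $z\le b<2b-s$.

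Once both inequalities are in hand, $\lceil\log_2(1+2^{b-1}(2^b+1)\ell)\rceil=m+s$, hence the bound~\eqref{eq:3} gives $k\le \frac{rn}{r+1}-m-s=n-\ell-m-s$, matching the value produced by $H$; therefore the code is $k$-optimal. I would present it as: (i) recall the construction and invoke Lemma~\ref{lem:independet} and Lemma~\ref{lem:condition} to get $d\ge 6$, pointing to Part~2 of the proof of Theorem~\ref{Thm:paras} since nothing there used $(2b-s)\mid m$; (ii) compute $n-k=\ell+s+m$ directly from~\eqref{eq:pc}; (iii) verify the two-sided bound on $1+2^{b-1}(2^b+1)\ell$ above to pin down the logarithm and conclude optimality. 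The main obstacle is purely the arithmetic in step (iii) — keeping the partial-spread defect term $2^{2b-s}(2^z-1)$ under control so that the product still lands in the half-open interval $(2^{m+s-1},2^{m+s}]$; everything else is a transcription of the already-proved $(2b-s)\mid m$ case.
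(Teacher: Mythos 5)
Your proposal follows the paper's proof essentially verbatim: substitute the partial $(2b-s)$-spread of Lemma~\ref{lam:2} into Construction~\ref{cons-lrc}, reuse Lemma~\ref{lem:independet} and Lemma~\ref{lem:condition} (which never used $(2b-s)\mid m$) to get $d\ge 6$, and pin down $\lceil\log_2(1+2^{b-1}(2^b+1)\ell)\rceil=m+s$ by sandwiching $1+\frac{rn}{2}$ in $(2^{m+s-1},2^{m+s}]$, exactly as the paper does. One minor correction: you have the two sides' difficulty reversed --- the upper bound $1+2^{b-1}(2^b+1)\ell\le 2^{m+s}$ is inherited for free from the full-spread case because the defect term only makes $\ell$ smaller, while the genuinely delicate side is the lower bound $2^{m+s-1}<1+2^{b-1}(2^b+1)\ell$, where $z\le b$ and $m\ge 4b$ must control the defect $2^{2b-s}(2^z-1)$; since you do address both sides (and your stated route for the lower bound is the right one), the argument still goes through.
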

 \begin{proof}
 By the method analogous to that used in the proof
of Theorem \ref{Thm:paras}, an LRC code has parameters $n=(r+1)\ell$, $k\geq n-\ell-m-s$, $d\geq6$. Hence, we only need to show that its dimension $k$ satisfies the bound~\eqref{eq:3}:
\begin{equation*}
k \leq \frac{r n}{r+1}-\min \left\{\log _{2}(1+\frac{r n}{2}), \frac{r n}{(r+1)(r+2)}\right\},
\end{equation*}
which is equivalent that $k\leq n-\ell-s-m$, i.e.,

\begin{equation}\label{eq::1}
 m+s-1<\min \left\{\log _{2}\left(1+\frac{r n}{2}\right), \frac{r n}{(r+1)(r+2)}\right\}\leq m+s. \end{equation}

As $n>5(r+1)(r+2)$, inequality~\eqref{eq::1} can be written as
\begin{equation}\label{eq:8}
2^{m+s-1}\leq 1+\frac{r n}{2}\leq2^{m+s}.
\end{equation}
For the left side of the inequality, it has to be verified that  $(2^{m+s-1}-1)(2^{2b-s}-1) < 2^{b-1}(2^b+1)(2^{m}-2^{2b-s}(2^z-1)-1)$. Since $m\geq4b$, $0\leq z\leq b$ and $0\leq s< b$, $2^{m+b-1}+2^{4b-s-1} =2^{m-4b}\cdot2^{3b-1}+2^{4b-s-1}>2^{4b+z-s-1}+2^{3b-1-s+z}$.  Then $(2^{m+s-1}-1)(2^{2b-s}-1) <2^{b-1}(2^b+1)(2^{m}-2^{2b-s}(2^z-1)-1)$ follows from $2^{m+s-1}\geq2^{2b-1}$ and $2^{2b-s}>2^{b-1}$, which supports the left side of formula~\eqref{eq:8}.

%

The right side of the inequality~\eqref{eq:8} holds by using similar arguments as in the above paragraph. The proof has been completed.
\end{proof}

Similar to the above analysis of Theorem~\ref{tm:op-l}, we obtain a family of $k$-optimal LRCs with locality $r=2^b$ from Theorem~\ref{th3:parity} when $\ell$ lies within a specific range.
 \begin{theorem}\label{th4:l}
 Let $r$, $b$ and $s$ be integers such that $r=2^b$, $b\geq3$ and $0\leq s< b$. Suppose that $m \geq 4b$ is an integer and $1\leq z\equiv m\bmod(2b-s)\leq b$. When
\begin{equation}\label{th4:eq}
\frac{2^{m+s-1}-1}{2^{b-1}(2^{b}+1)}<\ell \leq \frac{2^{m}-2^{(2b-s)}(2^z-1)-1}{2^{(2b-s)}-1},
\end{equation}
the code $\mathcal{C}$ in Theorem~\ref{th3:parity} is a $k$-optimal binary linear LRC with parameters  $[n=(r+1)\ell, k=n-\ell-m-s, d\geq6]$.
 \end{theorem}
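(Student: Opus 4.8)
The plan is to run the same passage from Theorem~\ref{Thm:paras} to Theorem~\ref{tm:op-l}, but starting from the partial-spread version Theorem~\ref{th3:parity} instead of the spread version. The structural observation that makes a ``range'' statement possible is that conditions (1)--(3) of Lemma~\ref{lem:condition}, and the conclusion of Lemma~\ref{lem:independet}, only ever involve columns lying inside a single block $H^i_G$ or inside a pair of blocks $H^{i_1}_G,H^{i_2}_G$; they are therefore inherited by every sub-collection of blocks. Concretely, Lemma~\ref{lam:2} with $t=2b-s$ supplies a partial $(2b-s)$-spread $\{W_1,\dots,W_L\}$ of $V_m$ of size $L=\frac{2^{m}-2^{(2b-s)}(2^z-1)-1}{2^{(2b-s)}-1}$, where $z\equiv m\bmod(2b-s)$. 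For any $\ell\le L$ in the claimed range I would keep only $W_1,\dots,W_\ell$, feed them into Construction~\ref{cons-lrc} together with the same $2b\times2^b$ desired matrix $A$ and the same $s\times s$ full-rank $G_U$, and assemble $H$ as in~\eqref{eq:pc}.

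The forward direction is then immediate. The top part of $H$ contributes $\ell$ rows and the bottom part $s+m$ rows, so $n=(r+1)\ell$ and $k\ge n-\ell-s-m$; the bound $d\ge6$ follows verbatim from the Case~(i)--(iii) analysis in the proof of Theorem~\ref{Thm:paras}, invoking Lemma~\ref{lem:independet} for the mixed-block case exactly as Theorem~\ref{th3:parity} does (none of that argument used that the $W_i$ came from a full $(2b-s)$-spread rather than a partial one). This gives that $\mathcal C$ is an $[n=(r+1)\ell,\ k\ge n-\ell-m-s,\ d\ge6]$ binary LRC with locality $r=2^b$.

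What remains is the matching upper bound $k\le n-\ell-m-s$, i.e.\ that $\mathcal C$ meets~\eqref{eq:3}. Since $m\ge4b$ (with $b\ge3$, $s\ge1$) forces $\ell>\frac{2^{m+s-1}-1}{2^{b-1}(2^{b}+1)}$ to be large enough that $n=(r+1)\ell\ge5(r+1)(r+2)$, the minimum in~\eqref{eq:3} is attained by $\log_2(1+\tfrac{rn}{2})$, so it suffices to prove
\[
2^{m+s-1}<1+\tfrac{rn}{2}\le 2^{m+s}.
\]
Writing $n=(2^b+1)\ell$ and $r=2^b$, so that $\tfrac{rn}{2}=2^{b-1}(2^b+1)\ell$, these two inequalities are precisely $\frac{2^{m+s-1}-1}{2^{b-1}(2^{b}+1)}<\ell\le\frac{2^{m+s}-1}{2^{b-1}(2^{b}+1)}$. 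The left one is the hypothesis~\eqref{th4:eq}. The right one follows from the upper hypothesis $\ell\le\frac{2^{m}-2^{(2b-s)}(2^z-1)-1}{2^{(2b-s)}-1}$ combined with the inequality $\frac{2^{m}-2^{(2b-s)}(2^z-1)-1}{2^{(2b-s)}-1}\le\frac{2^{m+s}-1}{2^{b-1}(2^{b}+1)}$, which is exactly the ``right side of~\eqref{eq:8}'' already established inside the proof of Theorem~\ref{th3:parity}. Hence $\lceil\log_2(1+\tfrac{rn}{2})\rceil=m+s$ and $k\le\frac{rn}{r+1}-m-s=n-\ell-m-s$, so the construction is $k$-optimal.

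The main obstacle is this final arithmetic step: verifying that the upper endpoint of the admissible $\ell$-interval truly lies below the threshold $\frac{2^{m+s}-1}{2^{b-1}(2^{b}+1)}$ at which $\lceil\log_2(1+\tfrac{rn}{2})\rceil$ would jump to $m+s+1$. After clearing denominators this is the same cross-multiplication inequality handled in Theorem~\ref{th3:parity}, and the hypotheses $m\ge4b$, $1\le s<b$, $1\le z\le b$ are precisely what makes the dominant term (of order $2^{b-1}(2^b+1)\cdot2^{m}\sim2^{m+b-1}$) dominate the error terms such as $2^{4b-s-1}$ and $2^{4b+z-s-1}$; no idea beyond that bookkeeping is needed, so I would either cite the computation from Theorem~\ref{th3:parity} directly or reproduce the one-line estimate.
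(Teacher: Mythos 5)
Your proposal is correct and follows essentially the same route as the paper: establish $k\ge n-\ell-m-s$ and $d\ge 6$ from Construction~\ref{cons-lrc} with a truncated partial $(2b-s)$-spread, then sandwich $1+2^{b-1}(2^b+1)\ell$ strictly between $2^{m+s-1}$ and $2^{m+s}$ using the hypothesis~\eqref{th4:eq} for the lower end and the right-hand inequality of~\eqref{eq:8} from Theorem~\ref{th3:parity} for the upper end, so that the bound~\eqref{eq:3} forces $k\le n-\ell-m-s$. The only difference is that you make explicit the (paper-implicit) observation that Lemmas~\ref{lem:condition} and~\ref{lem:independet} are inherited by any sub-collection of blocks, which is a welcome clarification rather than a deviation.
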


 \begin{proof} It is easy to construct the code $\mathcal{C}$ with parameters $n=(r+1)\ell, d\geq6$ and $k\geq n-\ell-m-s$ by using the partial $(2b-s)$-spread in Construction \ref{cons-lrc},  where $\ell=\frac{2^{m}-2^{(2b-s)}(2^z-1)-1}{2^{(2b-s)}-1}$. In the similar way provided in Theorem~\ref{tm:op-l}, we prove $k\leq n-\ell-m-s$ below.

 Combining the proof of Theorem~\ref{th3:parity} with the equation~\eqref{eq:8} and the condition~\eqref{th4:eq}, we derive the following chain of inequalities :
 \begin{equation*} 2^{m+s-1}<1+2^{b-1}(2^{b}+1)\ell\leq\frac{2^{b-1}(2^{b}+1)(2^{m}-2^{(2b-s)}(2^z-1)-1)}{2^{(2b-s)}-1}+1\leq 2^{m+s},
\end{equation*}
  which implies $k\leq n-\ell-m-s$  by the bound~\eqref{eq:3}. Therefore, $k=n-\ell-m-s$ proves the theorem.
\end{proof}

An example of Theorem~\ref{th3:parity} and Theorem~\ref{th4:l} is presented below.

\begin{example}\label{ex:3}
Let $m=12$, $b=3$, $s=1$ and let $\left\{W_{1}, W_{2}, \ldots, W_{129}\right\}$ be a partial $5$-spread of $V_{12}$. Then there exists an $[n=1161, k=1019, d \geq 6]$ binary LRC with locality $r=8$ by Theorem~\ref{th3:parity} . This code is $k$-optimal since it attains the bound~\eqref{eq:3}. Moreover, taking $113\leq \ell\leq 129$, the code is a $k$-optimal binary linear LRC with parameters $[n=(r+1)\ell, k= n-\ell-m-s, d\geq6]$ by Theorem~\ref{th4:l}.

\end{example}

Here we list parameters of $k$-optimal binary linear LRCs with disjoint local repair groups given by Theorem \ref{Thm:paras} and Theorem \ref{th3:parity} in Table \ref{table-1} for $3\leq b\leq6$ and  $12\leq m\leq 40$ , which achieve the maximum value obtained from the bound~\eqref{eq:3}. The values highlighted in bold in Table~\ref{table-1} are new parameters of $k$-optimal binary linear LRCs in the current paper. The parameters of LRCs with the same locality $r$ in~\cite{Anyu_2019} are also listed in Table~\ref{table-1}.
\begin{table*}[ht]
  \centering
  {
  \caption{$k$-optimal binary linear LRCs with $d\geq6$}\label{table-1}
  \footnotesize\begin{tabular}{|c|c|c|c|}
    \hline
     \multicolumn{1}{|c|}{} &$[n,k;k/n]$ from Theorem \ref{Thm:paras}  & $[n,k;k/n]$ from Theorem \ref{th3:parity}& $[n,k;k/n]$ in \cite{Anyu_2019}  \\ \cline{1-4}
     \multirow{2}{*}{$r=8$}&  \multirow{2}{*}{\boldsymbol{$[2457,2170; 0.8832]$}}  &{\boldsymbol{$[1161,1019; 0.8777]$} }  & \multirow{2}{*}{$[585,508; 0.8684]$ } \\
     &  &{\boldsymbol{$[9801, 8696; 0.8873]$ }}&\\
   &   \multirow{2}{*}{\boldsymbol{$[10066329,8947822; 0.8889]$}}  &{ \boldsymbol{$[38025, 33782; 0.8884]$ }} &\multirow{2}{*}{{$[2396745, 2130416;  0.8889]$}}  \\
     &   &{\boldsymbol{ $[1258281, 1118449;  0.8889]$}}&  \\
     \cline{1-4}
         \multirow{4}{*}{$r=16$} &  \multirow{2}{*}{\boldsymbol{$[4527185, 4260854; 0.9412]$}} & {\boldsymbol{$[4369, 4096; 0.9375]$}} & \multirow{2}{*}{{$[1118481, 1052664;0.9412]$}} \\
      & &  {\boldsymbol{$[1122833, 1056760; 0.9412]$}} &\\
        &\multirow{2}{*}{\boldsymbol{$[602957989425, 567489872357;  0.9412]$}}

        &  {\boldsymbol{$[1150033, 1082360; 0.9412]$}}  & \multirow{2}{*}{{$[73300775185, 68988964840;0.9412]$}} \\
      &  & {\boldsymbol{ $[287458321,270548976; 0.9412]$}}&\\   \cline{1-4}
       \multirow{2}{*}{$r=32$} & \multirow{2}{*}{\boldsymbol{$[562436193, 545392638; 0.9697]$}} & {\boldsymbol{$[33825,32780;0.9691]$}} & \multirow{2}{*}{{$[34636833, 33587202;0.9697]$}} \\
      & &  {\boldsymbol{$[34670625,33619970; 0.9697]$}} & \\   \cline{2-4}
     \cline{1-4}
     \multirow{2}{*}{$r=64$} & \multirow{2}{*}{\boldsymbol{$[4276545, 4210724; 0.9846]$}} & {\boldsymbol{$[68290625, 67239968; 0.9846]$} }& \multirow{2}{*}{{$[266305, 262184;0.9845]$}} \\
      &  & {\boldsymbol{$[1091051585,1074266140; 0.9846]$} }& \\   \cline{2-4}
     \cline{1-4}
  \end{tabular}
     }
\end{table*}
\begin{remark}

In \cite{Anyu_2019}, Wang et al. constructed the parity check matrix of binary linear LRCs based on a $2^{2b}$-ary Hamming code with length $\frac{2^m-1}{2^{2b}-1}$. Then they obtained an $[n'=\frac{2^{m}-1}{2^{b}-1}, k'=\frac{r n'}{r+1}-m, d \geq 6]$ binary linear LRC with disjoint local
repair groups and locality $r=2^b$. Furthermore, their code rate is $\frac{k'}{n'}=\frac{r}{r+1}-\frac{m}{n'}$. Compared with the code rate of binary linear LRCs for $r=2^b$ in \cite{Anyu_2019}, our constructions have larger code rate. In this paper, taking $0< s<b$,  the length of $k$-optimal LRC is $n=(2^b+1)\frac{2^m-1}{2^{2b-s}-1}$, which is approximately $2^s$ times greater than $n'$, and the dimension $k$ is $\frac{rn}{r+1}-s-m$. Hence, for the same $b$ and $r$, it is easy to show that the code rate $\frac{k}{n}=\frac{r}{r+1}-\frac{s}{n}-\frac{m}{n}$ is larger than $\frac{k'}{n'}$ because $\frac{s+m}{n}<\frac{m}{n'}$.
\end{remark}

Table \ref{table-2} gives the summary of $k$-optimal binary linear LRCs with disjoint local repair groups whose minimum distance $d\geq6$. We also list results of Theorem~\ref{tm:op-l} and Theorem~\ref{th4:l}. The comparison of the number of the disjoint local repair groups illustrates that $k$-optimal binary linear LRCs with a wider range of parameters can be obtained from Theorem~\ref{tm:op-l}.
Here, $\mu_{2}(m, 2b)$ denotes the size of a maximum partial $2b$-spread in $V_m$.
{\small
\begin{table*}[ht]
  \centering
  \caption{$[n=(r+1)\ell,k,d \geq 6]$ $k$-optimal  binary linear LRCs with respect to the bound \eqref{eq:3}}\label{table-2}
\begin{tabular}{|c|c|c|c|}
\hline Ref. & $r$ & The number of repair groups & Conditions \\ \hline
\cite{goparaju2014binary} & 2 & $\ell=\frac{2^m-1}{3}$ & $2|m,m\geq6$ \\ \hline
\cite{kim2019new}  &3& $\frac{2^{m}-1}{6}\leq \ell< \frac{2^{m}-1}{3}$ &   $m\geq6$ \\ \hline
 \cite{Anyu_2019}& $2^{b}$ & $\ell=\frac{2^{m}-1}{2^{2 b}-1}$ & $2 b \mid m, m \geq 4 b$ \\ \hline
  \cite{Jingxue2019}& $2^{b}$ & $\lfloor\frac{2^{m-1}-1}{2^{b-1}(2^{b}+1)}\rfloor
  +1\leq\ell \leq \mu_{2}(m, 2b)$  & $m \geq 4 b$ \\ \hline
 Thm~\ref{tm:op-l} & $2^{b}$ & $
\frac{2^{m+s-1}-1}{2^{b-1}(2^{b}+1)}<\ell \leq \frac{2^m-1}{2^{2b-s}-1}$ & $(2b-s)|m,m \geq 4 b,0\leq s<b$ \\ \hline
 \multirow{2}{*}{Thm~\ref{th4:l}}& \multirow{2}{*}{$2^{b}$} & \multirow{2}{*}{$
\frac{2^{m+s-1}-1}{2^{b-1}(2^{b}+1)}<\ell \leq \frac{2^{m}-2^{(2b-s)}(2^z-1)-1}{2^{(2b-s)}-1}$} & \multirow{1}{*}{$(2b-s)\nmid m,m \geq 4 b,0\leq s<b$}\\
&&&  $1\leq z\equiv m\bmod(2b-s)\leq b$ \\ \hline
\end{tabular}
\end{table*}
}
\begin{remark}
As a comparison, the $k$-optimal binary linear LRCs generated by this paper have more flexible parameters $[n,k]$ than those in \cite{Anyu_2019},\cite{Jingxue2019} for a fixed locality $r=2^b$. Particularly, if we take $s=0$ in Construction \ref{cons-lrc}, we have $\frac{2^{m-1}-1}{2^{b-1}(2^{b}+1)}<\ell \leq \frac{2^m-1}{2^{2b}-1}$ in Theorem~\ref{tm:op-l}. Note that $\ell=\frac{2^{m}-1}{2^{2 b}-1}$ in~\cite{Anyu_2019} and $\mu_{2}(m, 2b)\leq \frac{2^m-1}{2^{2b}-1}$ in~ \cite{Jingxue2019}. Hence, $k$-optimal binary linear LRCs in \cite{Anyu_2019} and \cite{Jingxue2019} are included in our construction. For example, let $b=3$ and $m=12$, we obtain $\ell=65$ in \cite{Anyu_2019} and $56<\ell\leq65$ in \cite{Jingxue2019}. However, Theorem~\ref{tm:op-l} and Theorem~\ref{th4:l} yield $56<\ell\leq65$  and $227<\ell\leq 273$ respectively, which shows that our method constructs more $k$-optimal binary linear LRCs with the same locality.
\end{remark}
More specially, we concentrate on the value of $\ell$, then we have the following corollary.
\begin{corollary}
Let $S=\bigcup_{ m\geq 4b,0\leq s<b \atop
1 \leq z\equiv m\bmod(2b-s)\leq b
}\left(\left[
\frac{2^{m+s-1}-1}{2^{b-1}(2^{b}+1)}+1, \frac{2^{m}-2^{(2b-s)}(2^z-1)-1}{2^{(2b-s)}-1}\right]\right)$. Suppose that $\ell\in S$, then there exists an $[n = (r + 1)\ell,k = r -m -s,d \geq6]$ binary linear LRC with locality $r = 2^b$, which is $k$-optimal with respect to the bound~\eqref{eq:3}.
\end{corollary}

\begin{proof}
Combining Theorem~\ref{tm:op-l} with Theorem~\ref{th4:l},
this corollary  can be obtained directly.
\end{proof}

\section{Shortening LRC}\label{sec:5}
The shortening technique can be applied to the derivation of binary linear LRCs with new parameters.  Let $\mathcal{C}$ be an $[n,k,d]$ code over $\mathbb{F}_q$ and let $\mathcal{S}$ be any set of $i\in[n]$ coordinates. Consider the set $\C(\mathcal{S})$ of codewords which are $0$ on $\mathcal{S}$; this set is a subcode of $\mathcal{C}$. Deleting the same coordinate $i$ for all $i\in\mathcal{S}$ in each codeword of $\mathcal{C}(\mathcal{S})$ gives
a code over $\mathbb{F}_q$ of length $n-|\mathcal{S}|$ called the code shortened on $\mathcal{S}$ and denoted $\mathcal{C}'$. Hence, we obtain the following theorem with respect to the shortened LRCs.

%

\begin{theorem}\label{the:shorten}
Let $\mathcal{C}$ be an $[n,k,d]$ $k$-optimal binary linear LRC constructed in Theorem~\ref{Thm:paras} or Theorem~\ref{th3:parity} such that $n \geq 2(r+1)$ and $k \geq 2r $.
\begin{itemize}
  \item [(1)]  Suppose that $a$ is an integer that satisfies $0\leq a\leq\frac{n}{r+1}$. An $[n', k', d']$ LRC $\mathcal{C}'$ with locality $r$ can be obtained by shortening $\mathcal{C}$, where parameters of $\C'$ satisfy $n'=n-a(r+1), k' \geq k-ar$ and $d'\geq d$.
  \item [(2)] Let $H^{i}=\left(\begin{array}{l}
H_i\\
H^i_G
\end{array}\right)$ for all $i\in[\ell]$. Removing a column of each distinct submatrix $H^{i_1},H^{i_2},\cdots,H^{i_\tau}$ from the parity check matrix $H$ respectively for $i_\tau\in[\ell]$, then there exists a shorten LRC with parameters $[n'=n-\tau,k'= k-\tau,d'=d]$.
\end{itemize}
\end{theorem}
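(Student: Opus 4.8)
The plan is to prove the two parts of Theorem~\ref{the:shorten} by carefully applying the generic shortening procedure described just above the statement, and then verifying that locality and the disjoint repair-group structure survive. For Part 1), the key observation is that shortening by one coordinate which lies in a repair group $\operatorname{supp}(\mathbf{h}_i)$ of size $r+1$ not only kills one coordinate but effectively collapses that whole group: once the $i$-th coordinate is forced to $0$, the parity equation $\sum_{j\in\mathcal{R}_i}c_j = c_i$ shows the remaining $r$ symbols of that group still satisfy a parity relation, so each of them keeps a recovery set of size at most $r-1\le r$ within the shortened code, and the other $\ell-1$ groups are untouched. Thus I would argue by induction on $a$: a single shortening step applied to a coordinate in one fixed group produces an $[n-1,k',d']$ code with $k-1\le k'\le k$ and $d'\ge d$ that is still an LRC with locality $r$; but to get the clean bound $k'\ge k-r$ after removing an entire group one should shorten at all $r+1$ coordinates of that group. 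The cleanest route is to note that deleting one full repair group removes $r+1$ coordinates and at most $r+1$ independent parity constraints but, because the first $\ell$ rows of $H$ in~\eqref{eq:pc} are ``group indicators,'' exactly one of those rows becomes zero, so the rank of the parity-check matrix drops by at most $r$; hence $k' \ge k - r$ per group, giving $n'=n-a(r+1)$, $k'\ge k-ar$ after removing $a$ groups, with $d'\ge d$ since shortening never decreases minimum distance. The hypotheses $n\ge 2(r+1)$ and $k\ge 2r$ just guarantee that at least one group can be removed while leaving a nondegenerate code.

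For Part 2), the plan is more concrete: one works directly with the block form~\eqref{eq:pc}. Removing a single column from a block $H^{i_\tau}$ deletes one coordinate; the corresponding top row $H_{i_\tau}$ still has weight $r\ge 2$ in the shortened matrix, so that group simply becomes a repair group of size $r$, and no parity row is lost, so the redundancy $n-k$ is unchanged in the sense that $k' = n' - (n-k)$, i.e. $k' = k-\tau$ after deleting $\tau$ columns from $\tau$ distinct blocks. One must check that the minimum distance is \emph{exactly} preserved, $d' = d$: the inequality $d'\ge d$ is the general shortening fact, and $d'\le d$ holds because a minimum-weight codeword of $\mathcal{C}$ can be chosen whose support avoids the (at most $\tau\le \ell$, one per block) deleted coordinates — indeed $\mathcal{C}$ has minimum-weight codewords supported on a single repair group plus a few extra coordinates, and since $d\ge 6$ is even and small relative to $r+1$, there is enough room inside the blocks to place such a codeword away from the deleted positions. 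I would spell this out using conditions (1)–(3) of Lemma~\ref{lem:condition} to exhibit an explicit weight-$d$ word, or more simply invoke that $\mathcal C$ was built so that each $H^i_G$ is (up to the zero column) a parity-check matrix of a code with distance $\ge 5$, giving control over short dependencies.

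The main obstacle I anticipate is the bookkeeping for the rank / dimension claims: asserting ``$k'\ge k-ar$'' in Part 1) requires showing the parity-check rank drops by at most $r$ per removed group, which hinges on the special structure of the top $\ell\times n$ part of $H$ (one indicator row per group) rather than on a generic rank inequality — a naive count would only give $k'\ge k-(r+1)a+\ell$ or similar. Likewise the equality $d'=d$ in Part 2) needs the existence of a minimum-weight codeword avoiding the deleted coordinates, which is a genuine (if mild) structural fact about the construction and not automatic. So after stating the generic shortening lemma, I would devote the bulk of the proof to (i) isolating the indicator-row structure of~\eqref{eq:pc} to pin down how the rank changes, and (ii) producing the explicit low-weight codeword needed for the reverse distance inequality; the remaining steps — locality is inherited, $n'$ and $k'$ formulas, induction on $a$ — are routine.
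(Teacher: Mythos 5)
Your overall strategy tracks the paper's. For Part 1) the paper does exactly what you ultimately settle on: delete the $a$ chosen locality rows together with the $a(r+1)$ columns forming their supports, observe that the surviving parity-check matrix $H'$ has only $m'=n-k-a$ rows (each deleted group kills exactly one indicator row from the top block of~\eqref{eq:pc}), and conclude $k'=n'-\mathrm{rank}(H')\ge n'-m'=k-ar$, with $d'\ge d$ and locality $r$ inherited because the remaining groups are untouched. One caution: your phrase ``the rank of the parity-check matrix drops by \emph{at most} $r$'' points the inequality the wrong way. To lower-bound $k'$ you need an \emph{upper} bound on $\mathrm{rank}(H')$, namely $\mathrm{rank}(H')\le n-k-a$ because $a$ rows become zero; a statement that the rank drops by at most $r$ would only yield an upper bound on $k'$. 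The structural observation you isolate (one indicator row vanishes per deleted group) is the right one, but it must be used as a row count, not as a rank-drop bound.

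For Part 2) you and the paper agree on $n'=n-\tau$, $k'=k-\tau$, and the locality bookkeeping. The divergence is in how $d'=d$ is treated. The paper's route is to note that deleting a column of a block $H^{i_j}$ amounts to rerunning Construction~\ref{cons-lrc} with a column-deleted desired matrix $A'$ in that block; since any four columns of $A'$ remain linearly independent, Lemma~\ref{lem:condition} and Lemma~\ref{lem:independet} still apply and the distance guarantee $d'\ge 6$ survives. The paper then simply asserts $d'=d$ without proving the reverse inequality $d'\le d$, and your proposal does not close this either: the claim that $\mathcal{C}$ has a minimum-weight codeword supported on a single repair group plus a few extra coordinates, and that it can be positioned to avoid the $\tau$ deleted columns, is not established anywhere in the paper --- the construction only certifies $d\ge 6$, and exhibiting a weight-$d$ codeword with prescribed support is a genuinely additional claim. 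So you have correctly identified the weak point of the statement, but your proposed fix is itself a gap; as written, both your argument and the paper's deliver only $d'\ge d$ together with the same lower bound $d'\ge 6$ enjoyed by the original code.
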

\begin{proof}
(1) Assume that $H$ is a parity check matrix of $\mathcal{C}$. The first $\ell$
rows $\mathbf{h}_1, \cdots, \mathbf{h}_{\ell}$ from $H$ form a set of locality rows of $\mathcal{C}$, where $\mathbf{h}_i\in V_n$ with $|\operatorname{supp}(\mathbf{h}_i)|=r+1$. Consider the first $\tau$ locality row of $H$, where $1\leq \tau\leq \ell $. By deleting the first $\tau$ locality rows  $\mathbf{h}_1,\mathbf{h}_2,\cdots,\mathbf{h}_{\tau}$ and the corresponding column whose index belongs to the support of $\mathbf{h}_i$ for all $i\in[\tau]$, we obtain an $m'\times n'$ submatrix $H'$ with $n'=n-\tau(r+1)$, $m'= n-k-\tau$. Let $\mathcal{C}'$ be the $[n',k',d']$ linear code with the parity check matrix $H'$. Due to rank$(H')\leq (n'-m')$, $k'\geq k-\tau r$. Note that $\mathcal{C}'$ is a shortening code of $\mathcal{C}$, then $\mathcal{C}'$ is an LRC code with minimum distance $d'\geq d$. This completes the proof.

(2) Since each submatrix $H^i$ is generated by a desired matrix $A$ and a matrix $G_{M_i}$ for $i\in[\ell]$. Note that $A$ can be viewed as a parity check matrix of a linear code with minimum distance $d=5$. Assume that $A'$ is an $[r-1,r-1-(s+t),d]$ matrix obtained by deleting a column of $A$. Then we construct $H^{i_1},\cdots,H^{i_\tau}$ of $H_G$ by utilizing the matrix $A'$ and the remaining $(\ell-\tau)$ submatrices of $H_G$ by utilizing the matrix $A$ in Construction \ref{cons-lrc}, where $\tau\in[\ell]$.
 Hence, a linear LRC $\mathcal{C}'$ with parameters $[n'=n-\tau, k'= k-\tau, d'=d]$ can be obtained. In particular, when $\tau=\ell$, the locality of LRC is $r-1$; otherwise, the locality of LRC is $r$.
\end{proof}
 Below, an example is given to show a shortened LRC in Theorem~\ref{the:shorten}.
\begin{example} The matrix $A'$ is generated by removing the first column from $A$ in Example \ref{ex:1}, i.e.,
$$
A'=
\left(\begin{array}{llllllll}
0 & 0 & 0 & 0 & 0 & 1 & 0 \\
1 & 0 & 0 & 0 & 0 & 0 & 1 \\
1 & 1 & 0 & 0 & 0 & 0 & 0 \\
0 & 0 & 1 & 0 & 0 & 1 & 1 \\
1 & 0 & 0 & 1 & 0 & 1 & 0 \\
1 & 0 & 0 & 0 & 1 & 1 & 1 \\
\end{array}\right).
$$
 Then the submatrix $H^i_G$ is constructed by $(\mathbf{0},~~G_{M_i}\cdot A')$ for each $i\in[2]$, and the remaining submatrix $H^i_G$ is constructed by $(\mathbf{0},~~G_{M_i}\cdot A)$ for $i\in\{3,4,\cdots,\ell\}$. Thus we obtain a $[2455,2168,6]$ binary linear LRC in Theorem~\ref{the:shorten}, which is $k$-optimal with respect to the bound~\eqref{eq:3}.

\end{example}


\section{Conclusion}\label{sec:6}

In this paper, we present an explicit construction of $k$-optimal binary linear LRCs with minimum distance $d\geq6$ by investigating parity check
matrices. In general, $k$-optimal binary linear LRCs with minimum distance $d\geq6$ and locality $r=2^b$ are constructed by $t$-spread of an $m$-dimensional vector space over $\mathbb{F}_2$ which is a collection of $t$-dimensional subspaces with pairwise trivial. Of interest is the idea of using intersection subspaces to replace the method of $t$-spread. Based on this new idea, we efficiently enlarge the range of new parameters of $k$-optimal binary linear LRCs with minimum distance $d\geq6$ and locality $r=2^b$. In fact, it yields more repair groups such that the corresponding constructions have more flexible lengths and dimensions. Compared with the previous works in \cite{Anyu_2019} and  \cite{Jingxue2019} with the same locality, the code lengths of our work are larger and the code rates are higher.

\Acknowledgements{We would like to thank Professor Chaoping Xing  for
introducing us to this problem. During this work, he provided many valuable discussions and expert advices, which are very useful for improving the quality of this
paper. The authors also would like to express their sincere gratefulness
to editor and the four anonymous reviewers  for their efforts in reviewing this article and constructive
comments.
\\
This work was supported in part by National Key R\&D Program of China  (Grant Nos. 2022YFA1004900, 2022YFA1005000), and National Natural Science Foundation of China (Grant No.  62272303).}



\begin{appendix}
\section{Examples of desired matrix $A$}\label{sec:8}
In this section, we provide an approach to construct the desired matrix $A$ with the help of a computer search program. Note that a desired matrix $A$ can be viewed as the parity check matrix of a $[2^b,2^b-2b,d\geq5]$ binary linear code. Although in \cite{Anyu_2019}, Wang et al. presented the explicit construction of binary linear code with parameters $[2^b,2^b-2b,d\geq5]$ from a shortened nonprimitive cyclic code, which can not be used directly here. A necessary condition for the desired matrix $A$ is that it contains a submatrix in which any two distinct columns are linearly independent. This makes it difficult to give an explicit construction of the desired matrix $A$.  In addition,  LRCs constructed by using an arbitrary $t\times n$ matrix $A$ have the same code length, dimension and minimum distance, which implies that we can weaken its explicit construction.
By the computer program MAGMA, we have found some examples of the desired matrix $A$, which also shows the existence of these desired matrices.
However, how to construct more desired matrices $A$ by using theoretical analysis and an effective search algorithm, remains an open problem.

We briefly recall the construction of a binary linear code with parameters $[2^b,2^b-2b,\geq5]$ in~\cite{Anyu_2019}. Let $n = 2^b +1$ and let $\alpha$ be a primitive root of $x^n-1$ with minimal polynomial $M_{\alpha}(x)$. Clearly, the degree of $M_{\alpha}(x)$ is $2b$. Define $\mathcal{C}$ to be the binary cyclic code of length $n$ generated by $(x-1)M_{\alpha}(x)$. It is not hard to show that $\left\{\alpha^{i}: i=-2,-1,0,1,2\right\}$ forms a subset of the roots of the generator polynomial of $\mathcal{C}$, so  $\mathcal{C}$ is an $\left[n=2^b+1,k=2^b-2b,d\geq6\right]$  binary linear code. The code $\mathcal{C}$ can be punctured by deleting one of the check bits to yield a code $\mathcal{C}'$ of length $2^{b}$ with $2 b$ check bits and $d \geq 5$. Hence, a $[2^b,2b]$ parity check matrix $A'$ of $\C'$ can be obtained. Notice that a desired matrix $A$ has the same parameters as matrix $A'$ because $A$ also can be viewed as a parity check matrix of an $[n=2^b,k=2^b-2b,d\geq5]$ linear code.
Hence, applying the row transformation to $A'$, $A'$ can be transformed into a desired matrix $A$.
 Here, if a $z\times n$ submatrix of a $k\times n$ desired matrix satisfies that any two distinct column vectors from the submatrix are linearly independent, we denote this desired matrix $A$ by an $[n,k]_z$ matrix. See the following examples of the desired matrix $A$.

\begin{itemize}
\item $[8,6]_4$ matrix $A$:
$$\left(\begin{array}{llllllll}
10000010 \\
01000  0 0 1 \\
1 1  1  0  0  0  0  0 \\
1 0  0  1  0  0  1  1 \\
0 1  0  0  1  0  1  0 \\
0 1  0  0  0  1  1  1 \\
\end{array}\right)$$
\item $[16,8]_5$ matrix $A$:
$$\left(\begin{array}{ccccccccccccccccccc}
0010010001101011\\
001000001000111 1\\
0 1 0 0 0 00 10 0 0 1 1 1 1 0\\
1 0 0 0 0 0 0 0 0 1 0 0 1 1 1 0\\
0 1 0 1 0 0 0 0 1 1 1 1 1 1 0 0\\
0 1 1 0 1 0 0 0 0 1 0 1 1 0 0 1\\
0 1 1 0 0 1 1 0 0 0 1 1 1 1 1 0\\
0 0 0 0 0 1 0 1 1 1 0 1 1 1 0 1
\end{array}\right)$$
\item $[32,10]_6$ matrix $A$:
$$
\left(\begin{array}{cccccccccccccccccccccccccccccccc}
10000 00000 10010 11101 01101 01110 10\\
01000 00000 01001 01110 10110 10111 01\\
00100 00000 11000 00001 00010 11111 01\\
00010 00000 10000 10110 11000 11011 01\\
10001 01000 10001 11010 01100 10111 01\\
01001 00101 10011 10100 01011 10011 01\\
00100 00100 01111 01010 01001 01011 11\\
00010 00010 11110 10001 00000 00110 00\\
00000 11000 00001 00010 11111 01000 10\\
00000 00010 01011 10101 10101 11010 01
\end{array}\right)
$$
\item $[64,12]_8$ matrix $A$:
$$
\left(\begin{array}{cccccccccccccccccccccccccccccccc}
1 0 0 0 0 0 0 0 0 0 0 0 1 0 1 1 0 0 1 1 1 0 1 0 0 1 0 0 1 1 0 1 0 1 0 0 0 1 0 0 1 0 0 0 1 0 1 0 1 1 0 0 1 0 0 1 0 1 1 1
    0 0 1 1\\
0 1 0 0 0 0 0 0 0 0 0 0 1 0 0 0 0 1 0 0 1 1 1 0 1 0 1 0 1 1 1 1 0 0 1 0 0 1 0 1 0 1 0 0 1 0 0 1 1 1 1 0 1 0 1 0 1 1 1 0
    0 1 0 0\\
0 0 1 0 0 0 0 0 0 0 0 0 0 1 0 0 0 0 1 0 0 1 1 1 0 1 0 1 0 1 1 1 1 0 0 1 0 0 1 0 1 0 1 0 0 1 0 0 1 1 1 1 0 1 0 1 0 1 1 1
    0 0 1 0\\
0 0 0 1 0 0 0 0 0 0 0 0 0 0 1 0 0 0 0 1 0 0 1 1 1 0 1 0 1 0 1 1 1 1 0 0 1 0 0 1 0 1 0 1 0 0 1 0 0 1 1 1 1 0 1 0 1 0 1 1
    1 0 0 1\\
1 0 0 0 0 0 0 1 1 0 0 0 0 1 1 0 0 0 0 0 0 1 0 1 0 1 0 1 1 0 1 1 1 0 1 1 1 0 0 1 0 0 1 0 1 0 0 1 0 0 1 1 1 0 1 1 1 0 1 1
    0 1 0 1\\
0 1 0 0 0 0 0 0 1 0 1 0 0 0 0 0 0 1 0 0 1 1 0 1 0 1 1 0 1 0 0 0 0 1 1 0 0 0 1 1 1 1 1 1 1 1 0 0 0 1 1 0 0 0 0 1 0 1 1 0
    1 0 1 1\\
0 0 1 0 0 0 0 0 1 0 0 1 1 0 1 1 0 1 1 0 1 0 1 0 1 0 1 1 0 1 1 0 1 1 0 0 1 0 0 0 0 0 1 0 0 0 1 1 0 1 0 0 0 1 1 1 1 0 0 0
    1 0 1 1\\
0 0 0 1 0 0 0 1 1 0 1 0 0 0 1 1 1 1 0 0 0 1 0 1 1 0 0 0 1 0 0 0 0 0 1 0 0 1 1 0 1 1 0 1 1 0 1 0 1 0 1 0 1 1 0 1 1 0 1 1
    0 0 1 0\\
0 0 0 0 1 0 0 1 0 0 1 1 0 0 1 0 0 1 0 0 0 0 0 1 1 0 0 1 1 1 1 0 1 1 0 1 0 1 1 0 1 0 1 0 1 0 1 0 1 1 0 1 0 1 1 0 1 1 1 1
    0 0 1 1\\
0 0 0 0 0 1 0 1 0 0 1 0 1 1 1 0 1 0 0 0 0 0 1 0 0 1 1 1 0 1 1 0 0 0 0 0 1 1 0 1 1 1 0 0 1 0 0 0 0 0 1 0 1 1 1 0 1 0 0 1
    0 1 0 0\\
0 0 0 0 0 0 1 1 0 0 1 0 1 1 0 1 0 0 1 1 0 0 0 0 0 0 0 0 1 0 1 1 1 1 1 0 0 1 1 1 0 1 1 1 0 1 0 1 1 1 0 1 1 1 0 0 1 1 1 1
    1 0 1 0\\
0 0 0 0 0 0 0 0 0 1 0 1 1 0 0 1 1 1 0 1 0 0 1 0 0 1 1 0 1 0 1 0 0 0 1 0 0 1 0 0 0 1 0 1 0 1 1 0 0 1 0 0 1 0 1 1 1 0 0 1
    1 0 1 0
\end{array}\right)
$$
\item $[128,14]_{10}$ matrix $A$:
$$
{\tiny\left(\begin{array}{cccccccccccccccccccccccccccccccc}
1 0 0 0 0 0 0 0 0 0 0 0 0 0 0 1 0 1 0 0 1 0 0 1 0 1 1 1 0 0 1 0 1 1 0 1 0 1 1 1 1 0 0 0 1 1 1 0 1 1 0 1 0 0 1 0 0 0 0 1
    0 0 0 1 0 1 1 1 1 1 0 0 1 0 0 1 1 1 1 1 0 1 0 0 0 1 0 0 0 0 1 0 0 1 0 1 1 0 1 1 1 0 0 0 1 1 1 1 0 1 0 1 1 0 1 0 0 1
    1 1 0 1 0 0 1 0 0 1\\
0 1 0 0 0 0 0 0 0 0 0 0 0 0 1 0 0 0 1 1 0 1 1 0 0 1 0 1 1 1 0 0 1 1 0 0 0 1 0 0 1 1 0 1 1 0 1 0 1 1 0 0 1 1 0 1 0 0 1 0
    1 0 1 0 0 1 0 0 0 1 1 1 0 1 1 1 0 0 0 1 0 0 1 0 1 0 1 0 0 1 0 1 1 0 0 1 1 0 1 0 1 1 0 1 1 0 0 1 0 0 0 1 1 0 0 1 1 1
    0 1 0 0 1 1 0 1 1 0\\
0 0 1 0 0 0 0 0 0 0 0 0 0 0 0 1 0 0 0 1 1 0 1 1 0 0 1 0 1 1 1 0 0 1 1 0 0 0 1 0 0 1 1 0 1 1 0 1 0 1 1 0 0 1 1 0 1 0 0 1
    0 1 0 1 0 0 1 0 0 0 1 1 1 0 1 1 1 0 0 0 1 0 0 1 0 1 0 1 0 0 1 0 1 1 0 0 1 1 0 1 0 1 1 0 1 1 0 0 1 0 0 0 1 1 0 0 1 1
    1 0 1 0 0 1 1 0 1 1\\
0 0 0 1 0 0 0 0 0 0 0 0 0 0 1 0 0 0 0 1 1 1 1 1 0 1 1 1 0 0 1 0 1 0 0 1 1 1 1 0 0 0 1 0 1 0 1 1 0 0 0 1 0 1 1 1 0 1 1 0
    1 0 0 0 0 1 1 0 1 0 0 0 1 1 1 0 0 0 1 0 1 1 0 0 0 0 1 0 1 1 0 1 1 1 0 1 0 0 0 1 1 0 1 0 1 0 0 0 1 1 1 1 0 0 1 0 1 0
    0 1 1 1 0 1 1 1 1 1\\
1 0 0 0 0 0 0 0 0 0 1 1 1 0 0 1 1 1 0 1 0 1 0 1 1 1 0 0 1 1 1 0 0 0 0 0 0 0 0 0 1 0 0 0 0 0 1 0 1 1 1 1 1 0 0 1 0 1 1 0
    1 1 0 0 1 1 0 1 0 1 0 0 1 1 1 1 1 1 1 0 1 0 0 1 0 0 1 0 0 1 0 1 1 1 1 1 1 1 0 0 1 0 1 0 1 1 0 0 1 1 0 1 1 0 1 0 0 1
    1 1 1 1 0 1 0 0 0 0\\
0 1 0 0 0 0 0 0 0 0 0 0 1 1 0 1 0 1 0 1 1 0 0 0 0 0 0 0 0 0 0 1 0 0 1 0 0 1 1 0 1 0 0 1 0 1 1 1 0 0 0 1 0 1 0 1 1 1 1 0
    0 1 0 0 0 1 0 1 0 0 0 1 1 1 1 1 0 1 1 0 0 0 0 1 1 0 1 1 1 1 1 0 0 0 1 0 1 0 0 0 1 0 0 1 1 1 1 0 1 0 1 0 0 0 1 1 1 0
    1 0 0 1 0 1 1 0 0 1\\
0 0 1 0 0 0 0 0 0 0 1 0 0 1 1 1 0 1 0 1 1 0 1 1 0 0 1 0 1 0 0 1 0 1 1 1 0 0 0 1 1 1 1 1 1 0 1 0 1 1 1 1 1 1 0 0 0 1 1 1
    0 1 0 0 1 0 1 0 0 1 1 0 1 1 0 1 0 1 1 1 0 0 1 0 0 0 0 0 0 0 1 0 0 0 0 0 1 1 1 0 1 1 0 0 0 0 0 0 0 1 1 1 1 0 0 0 0 0
    0 0 1 1 0 1 1 1 0 0\\
0 0 0 1 0 0 0 0 0 0 1 0 1 1 1 0 0 0 0 1 0 1 0 0 1 1 1 0 0 0 1 1 0 0 1 1 0 0 0 1 1 1 0 0 1 0 1 0 0 0 0 1 1 1 0 1 0 0 0 0
    0 0 1 0 0 0 0 0 1 0 0 1 0 1 1 1 0 1 1 1 0 1 0 1 0 1 1 0 1 1 1 1 1 1 0 0 1 1 1 0 0 1 1 1 1 1 1 0 1 1 0 1 0 1 0 1 1 1
    0 1 1 1 0 1 0 0 1 0\\
0 0 0 0 1 0 0 0 0 0 1 0 0 1 0 1 1 1 0 1 1 1 0 1 0 1 0 1 1 0 1 1 1 1 1 1 0 0 1 1 1 0 0 1 1 1 1 1 1 0 1 1 0 1 0 1 0 1 1 1
    0 1 1 1 0 1 0 0 1 0 0 0 0 0 1 0 0 0 0 0 0 1 0 1 1 1 0 0 0 0 1 0 1 0 0 1 1 1 0 0 0 1 1 0 0 1 1 0 0 0 1 1 1 0 0 1 0 1
    0 0 0 0 1 1 1 0 1 0\\
0 0 0 0 0 1 0 0 0 0 0 1 0 1 1 1 1 1 0 0 1 0 1 1 0 1 1 0 0 1 1 0 1 0 1 0 0 1 1 1 1 1 1 1 0 1 0 0 1 0 0 1 0 0 1 0 1 1 1 1
    1 1 1 0 0 1 0 1 0 1 1 0 0 1 1 0 1 1 0 1 0 0 1 1 1 1 1 0 1 0 0 0 0 0 1 0 0 0 0 0 0 0 0 0 1 1 1 0 0 1 1 1 0 1 0 1 0 1
    1 1 0 0 1 1 1 0 0 0\\
0 0 0 0 0 0 1 0 0 0 1 1 0 0 1 1 0 1 1 1 1 0 0 1 0 0 0 0 1 1 1 1 1 0 0 0 0 1 0 0 1 1 1 1 0 1 1 0 0 1 1 0 0 0 1 0 0 0 0 0
    0 0 1 0 1 0 0 0 0 0 1 1 0 1 0 1 0 1 1 1 0 1 0 0 1 0 0 1 0 0 1 1 1 0 1 1 0 1 1 1 0 0 1 0 0 1 0 0 1 0 1 1 1 0 1 0 1 0
    1 1 0 0 0 0 0 1 0 1\\
0 0 0 0 0 0 0 1 0 0 1 0 0 0 1 1 1 0 1 1 0 0 1 0 1 1 0 1 1 1 1 0 1 0 1 1 1 0 1 0 0 1 1 0 1 0 1 0 1 0 1 1 1 1 1 0 0 1 0 1
    1 1 1 0 0 0 0 1 0 0 0 0 1 1 1 1 0 1 0 0 1 1 1 1 1 0 1 0 1 0 1 0 1 1 0 0 1 0 1 1 1 0 1 0 1 1 1 1 0 1 1 0 1 0 0 1 1 0
    1 1 1 0 0 0 1 0 0 1\\
0 0 0 0 0 0 0 0 1 0 0 1 0 0 1 1 0 1 0 0 1 0 1 1 1 0 0 0 1 0 1 0 1 1 1 1 0 0 1 0 0 0 1 0 1 0 0 0 1 1 1 1 1 0 1 1 0 0 0 0
    1 1 0 1 1 1 1 1 0 0 0 1 0 1 0 0 0 1 0 0 1 1 1 1 0 1 0 1 0 0 0 1 1 1 0 1 0 0 1 0 1 1 0 0 1 0 0 1 0 0 0 0 0 0 0 0 0 0
    1 1 0 1 0 1 0 1 1 0\\
0 0 0 0 0 0 0 0 0 1 1 1 0 0 0 1 0 0 1 1 1 0 0 1 0 1 1 1 1 0 0 1 1 0 1 0 1 1 1 0 0 0 0 1 1 0 0 0 0 1 0 1 0 1 1 0 1 1 1 1
    1 0 1 1 0 1 0 1 0 0 0 0 1 1 0 0 0 0 1 1 1 0 1 0 1 1 0 0 1 1 1 1 0 1 0 0 1 1 1 0 0 1 0 0 0 1 1 1 0 0 0 0 0 0 0 0 0 0
    0 1 0 0 1 1 0 0 1 0
\end{array}\right)}
$$
\end{itemize}
\end{appendix}



\end{document}